
\documentclass[runningheads]{llncs}

\usepackage{graphicx}
\usepackage{epstopdf}
\DeclareGraphicsRule{.tif}{png}{.png}{`convert #1 `basename #1 .tif`.png}

\usepackage{latexsym,amsmath,amssymb,stmaryrd}
\usepackage{graphicx,epsfig}
\usepackage{datetime}
\usepackage[sort,nocompress]{cite}

\usepackage[usenames,dvipsnames,svgnames,table]{xcolor}
\usepackage{tikz}
\definecolor{darkgreen}{rgb}{0,.5,0}
\newcommand{\cola} [1] {\textcolor{blue}{#1}}
\newcommand{\colb} [1] {\textcolor{darkgreen}{#1}}
\newcommand{\colc} [1] {\textcolor{red}{#1}}
\newcommand{\cold} [1] {\textcolor{brown}{#1}}

\newenvironment{myenumerate}{\begin{enumerate}\setlength{\itemsep}{-.3ex}}%
                             {\end{enumerate}}
                           {\end{itemize}}

\renewcommand{\epsilon}{\varepsilon}

\def\IN{{\mathbb N}}
\def\IR{{\mathbb R}}

\def\pv{{\bf{pv}}}
\def\PV{{\mathcal{{Q}}}} 
 
\def\pdb{PdB}

\def\a{{\tt a}}
\def\b{{\tt b}}
\def\c{{\tt c}}

\def\x{{\tt x}}
\def\y{{\tt y}}

\newtheorem{observation}{Observation}


\title{On the Parikh-de-Bruijn grid}
\author{P{\'e}ter Burcsi\inst{1}
\and Zsuzsanna Lipt{\'a}k\inst{2}
\and W.\ F.\ Smyth\inst{3,4}}

\institute{Dept.\ of Computer Algebra, E\"otv\"os Lor\'and Univ., Budapest, Hungary 
\and 
Dip.\ di Informatica, University of Verona, Verona, Italy
\and
Dept.\ of Computing \& Software, McMaster University, Hamilton, Canada
\and 
School of Engineering \& Inf.\ Techn., Murdoch University, Perth, Australia
}

\begin{document}

\maketitle


\begin{abstract} We introduce the Parikh-de-Bruijn grid, a graph whose vertices are fixed-order Parikh vectors, and whose edges are given by a simple shift operation. This graph gives structural insight into the nature of sets of Parikh vectors as well as that of the Parikh set of a given string. We show its utility by proving some results on Parikh-de-Bruijn strings, the abelian analog of de-Bruijn sequences. 
\end{abstract}


\section{Introduction}

In recent years, interest in so-called abelian stringology has soared. The underlying concept is that of {\em Parikh equivalence} or {\em abelian equivalence}: two strings are Parikh equivalent if the multiplicity of each character is the same in both. Two Parikh equivalent strings can be transformed into one another by permuting their characters; for example, {\tt ENRAGED} and {\tt ANGERED} are Parikh equivalent\footnote{source: {\tt www.fun-with-words.com}}. (See Section~\ref{sec:definitions} for precise definitions.)

Many string problems allow parallel formulations when abelian equivalence is substituted for equality, e.g.\ abelian pattern matching (also known as jumbled pattern matching or histogram matching)~\cite{ButmanEL04,IJFCS12,ToCS12,MoosaR_JDA12,LeeLZ_SPIRE12,TCS14,AmirCLL_ICAPL14,ChanL15,GagieHLW15,ButmanLM16,KociumakaRR17}, 
abelian periods~\cite{CrochemoreIKKPRRTW13,FiciLLP14,FiciLLPS16,KociumakaRR17b}, 
abelian squares, repetitions, and runs~\cite{CS1997,FiciLLLMPP16,FiciKLLP16}, abelian pattern avoidance~\cite{Blanchet-SadriKMSSX12}, abelian borders~\cite{ChristodoulakisCCI14}, 
common abelian factors~\cite{AlatabbiILR16,BadkobehGGNPS16}, 
abelian problems on run-length encoded strings~\cite{IPL13,GiaGrab_IPL13,AmirAHLLR16,CunhaDGWKS17}, 
abelian reconstruction~\cite{AcharyaDMOP15,BarthaBL16}, 
to name just some of the topics that have been treated in recent literature.

In this paper, we present a new model, the Parikh-de-Bruijn grid. The Parikh-de-Bruijn grid is the abelian analogue of the de Bruijn graph; however, as we will see, it has a completely different structure.  

De Bruijn graphs~\cite{deBruijn46} have proved immensely useful in improving our understanding of the exact structure of fixed-sized substring sets of a string, among other things yielding a full description of all strings with the same set or multiset of $k$-length substrings \cite{Ukkonen92,Pevzner95}. This approach gave rise to algorithms for reconstructing strings from their $k$-length substrings, which in turn led to the development of efficient tools for sequence assembly using NGS data (Next-Generation Sequencing)~\cite{PevznerTW01,ZerbinoBirney08}. The impact of these algorithms on  computational biology, and as a consequence, on current research in biology, medicine, and pharmacology, can hardly be overstated: These are the algorithms that have made possible ultra-fast sequence assembly using high-throughput data that has become available with the advent of the new sequencing technologies.

Parikh vectors over an alphabet $\Sigma$ of size $\sigma$ are vectors with $\sigma$ non-negative integer components, and can be viewed as equivalence classes of strings: namely, of Parikh equivalent strings. For example, {\tt aab}, {\tt aba}, {\tt baa} all have Parikh vector $(2,1,0)$ over the alphabet $\Sigma = \{{\tt a}, {\tt b}, {\tt c}\}$. 
The {\em Parikh-de-Bruijn grid} ({\em PdB-grid} for short) is the abelian analog of the de Bruijn graph. Recall that in the de Bruijn graph of order $k$, the vertex set consists of all $k$-length\footnote{There is some inconsistency in the literature as to the {\em order} of a de Bruijn graph: if $V = \Sigma^k$, so that $E=\Sigma^{k+1}$, then some publications refer to this graph as de Bruijn graph of order $k$, and others as order $(k+1)$. Throughout this paper, we call this graph the de Bruijn graph of order $k$.}
strings over the alphabet $\Sigma$, and the edge set of the $(k+1)$-length strings over $\Sigma$.  
In the PdB-grid, the vertices are $k$-order Parikh vectors, i.e.\ Parikh vectors whose coordinates add up to $k$. The edges of the PdB-grid are analogous to the de Bruijn graph in that they represent a simple one-character shift. (For details see  Section~\ref{sec:definitions}.) 

As we will see, the PdB-grid has some properties in common with the de Bruijn graph of the same order; however, there are important differences. As in de Bruijn graphs, every string corresponds to a walk in the PdB-grid; however, the converse is not true: for a walk to correspond to a string, it needs to fulfil additional conditions.  Moreover, $(k+1)$-order Parikh vectors do not correspond to edges, but to certain substructures with $\sigma$ incident edges (triangles in the case of $\sigma = 3$, tetrahedra in the case of $\sigma = 4$, etc.). In contrast to de Bruijn graphs, here 
$(k-1)$-order Parikh vectors, too, correspond to unique substructures. For larger alphabets, the PdB-grid even contains corresponding substructures for certain Parikh vectors of higher and lower orders. In fact, the PdB-grid over alphabet size $\sigma$ can be identified with a regular simplicial complex of dimension ${\sigma -1}$.

The PdB-grid has a different structure from the de Bruijn graph, because the relationships it models, that of Parikh vectors of strings and Parikh vectors of their substrings, is quite different from the substring relationship modeled by de Bruijn graphs. We believe that the PdB-grid may prove to be a powerful tool in improving our understanding of problems in abelian stringology, while giving a common framework for many of the problems in this area. We will demonstrate the utility of the PdB grid by exploring the topic of {\em covering strings} and {\em Parikh-de-Bruijn strings}: the former are strings which contain, for every Parikh vector $p$ of order $k$, a substring with Parikh vector $p$, while the latter contain {\em exactly one} substring for every $k$-order Parikh vector $p$. Parikh-de-Bruijn strings are the abelian analog of the well-known de Bruijn sequences~\cite{deBruijn46}. 

\medskip

The paper is organized as follows: In Section~\ref{sec:definitions}, we give all definitions used in the paper, then we briefly recall de Bruijn graphs, introduce PdB-grids and give first results on PdB-grids and Parikh vector sets. In Section~\ref{sec:covering}, we study $k$-covering and PdB-strings over different alphabets, and ask the question (which we answer partially) for which combinations of $k$ and $\sigma$ PdB-strings exist. 
In Section~\ref{sec:experiments}, we give experimental results about PdB-strings and covering strings for different combinations of $k$ and $\sigma$. 
Finally, in Section~\ref{sec:conclusion}, we give an outlook and state some open problems. 


\section{The Parikh-de-Bruijn grid}\label{sec:definitions}

\subsection{Basic definitions}

Let $\Sigma$ be a finite ordered alphabet with $|\Sigma|=\sigma$. We can write $\Sigma  = \{\a_1 < \ldots < \a_{\sigma}\}$.  
A string (or word) $w=w_1\ldots w_n$ over $\Sigma$ is a finite sequence of characters from $\Sigma$. Its length is denoted by $|w|=n$. For two strings $u,v$ over $\Sigma$, we write $uv$ for their concatenation. If $w=uxv$, for $u,v,x$ (possibly empty) strings, then $u$ is called a prefix, $v$ a suffix, and $x$ a substring or factor of $w$. 

A {\em Parikh vector} (Pv for short) over $\Sigma$ is a vector of length $\sigma$ with non-negative integer entries.  For a string $w$ over $\Sigma$, the Pv of $w$, denoted $\pv(w)$, is defined by $\pv_i = |\{ j \mid w_j = \a_i\}|$, for $i=1,\ldots,\sigma$, the number of occurrences of character $\a_i$ in $w$.  The function $\pv$ induces an equivalence relation on the set of all strings, often called {\em Parikh equivalence} or {\em abelian equivalence}. 
The {\em order} of a Pv $p$ is the sum of its entries, which equals the length of any string $w$ for which\ $p = \pv(w)$. We denote the $i^{\text{th}}$ unit vector by $e_i = \pv(\a_i)$. 

Let $\PV(k,\sigma)$ be the set of all Parikh vectors of order $k$ over an alphabet of size $\sigma$.
It is an easy exercise to show that $|\PV(k,\sigma)| = {k+\sigma - 1 \choose \sigma -1} = {k+\sigma - 1 \choose k}$ (see, e.g.~\cite{Jukna11}). For a string $w$ over $\Sigma$, the {\em Parikh set} of $w$, $\Pi(w)$, is the set of Parikh vectors of all substrings of $w$, and $\Pi_k(w) = \Pi(w) \cap \PV(k,\sigma)$, for $k=1,\ldots,|w|$, the set of Parikh vectors of $k$-length substrings of $w$.
For two Parikh vectors $p$ and $q$ over $\Sigma$, we define: $p \leq q$, if for all $i=1,\ldots, \sigma$, $p_i\leq q_i$;  $p+q = r$ is given by $r_i = p_i+q_i$ for $i=1,\ldots,\sigma$; and, if $p\leq q$, then $q-p=r$ by $r_i = q_i - p_i$, for $i=1,\ldots,\sigma$. For a string $w=uv$, we have $\pv(w) = \pv(u)+\pv(v)$.

\begin{definition}[neighbors, children, parents, meet, join]
Let  $p,p' \in \PV(k,\sigma)$, $q\in \PV(k-1,\sigma)$, and $p_1,\ldots,p_m$ be Parikh vectors of any order. 
\begin{itemize}
\item  $p$ and $p'$ are called {\em neighbors} if there exist $i\neq j$ such that $p' = p - e_i + e_j$. 
\item  $q$ is called a {\em child} of $p$, and $p$ a {\em parent} of $q$, if there exists $i$ such that $p = q + e_i$.  
\item The {\em meet} $\wedge(p_1,\ldots,p_m) = r$ is defined by $r_i = \min \{(p_{j})_i \mid j=1,\ldots,m\}$, and the {\em join} $\vee(p_1,\ldots,p_m) = r'$ by $r'_i = \max \{(p_{j})_i \mid j=1,\ldots,m\}$. (Here $(p_j)_i$ denotes the $i^{\text{th}}$ component of Pv $p_j$.)
\end{itemize}
\end{definition}

Thus $p,p'\in \PV(k,\sigma)$ are neighbors if and only if they can be the Pv's of two consecutive $k$-length windows in some string $w$, i.e.\ if $p = \pv(w_i\ldots w_{i+k-1})$ and $p' = \pv(w_{i+1}\ldots w_{i+k})$ for some $i$. 
Pv $q$ is a child of Pv $p$ if and only if, for some string $w$ such that $\pv(w) = p$, $q$ is the Pv of $w$'s prefix of length $k-1$. 

\begin{example} The neighbors of $(1,2,0)$ are $(0,3,0),(0,2,1),(2,1,0),(1,1,1)$, its parents are 
 $(2,2,0),(1,3,0),(1,2,1)$, and its children are $(0,2,0),(1,1,0)$. 
\end{example}

Observe that every Parikh vector $p$ has $\sigma$ parents and $\gamma$ children, where $\gamma$ equals the number of non-zero components of $p$. 

\begin{lemma}[Clique-Lemma]
\label{lemma:clique}
Let $p_1,p_2,\ldots,p_{\sigma} \in \PV(k,\sigma)$ be pairwise distinct, $\sigma \geq 2$. Let $q = \wedge (p_1,\ldots,p_{\sigma})$ and $r = \vee (p_1,\ldots,p_{\sigma})$. If for all $i\neq j$, $p_i$ is a neighbor of $p_j$, then, for $\sigma>2$, exactly one of the following two cases holds, while for $\sigma=2$, both hold: 
\begin{myenumerate}
\item $\forall i: q \text{ is a child of } p_i,$
\item $\forall i: r \text{ is a parent of } p_i.$
\end{myenumerate}

Conversely, for any $\sigma$, if $q$ is a child of every $p_i$, or if $r$ is a parent of every $p_i$, then the $p_i$ are pairwise neighbors. 
\end{lemma}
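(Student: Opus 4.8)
The plan is to prove the two implications separately, beginning with the (easy) converse. Throughout, $q$ and $r$ denote the meet and join of $p_1,\ldots,p_\sigma$. If $q$ is a child of every $p_i$, then $p_i = q + e_{c_i}$ for suitable indices $c_i$, and since the $p_i$ are pairwise distinct the $c_i$ are pairwise distinct; the identity $p_i = p_j - e_{c_j} + e_{c_i}$ then exhibits $p_i$ and $p_j$ as neighbors whenever $i\neq j$. The case in which $r$ is a parent of every $p_i$ is entirely symmetric: write $p_i = r - e_{d_i}$ and use $p_i = p_j + e_{d_j} - e_{d_i}$.

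For the forward direction, assume $p_1,\ldots,p_\sigma$ are pairwise neighbors. Fix $p_1$ and, for each $i\in\{2,\ldots,\sigma\}$, write $p_i = p_1 - e_{a_i} + e_{b_i}$ with $a_i\neq b_i$ (possible since $p_i\neq p_1$). The first key step is the local claim: for all $i\neq j$ in $\{2,\ldots,\sigma\}$ one has $a_i=a_j$ or $b_i=b_j$. This follows by computing $p_i - p_j = e_{b_i} + e_{a_j} - e_{a_i} - e_{b_j}$ and noting that, since $p_i$ and $p_j$ are neighbors, this vector must equal $e_s - e_t$ for some $s\neq t$, i.e.\ it has exactly one $+1$, one $-1$, and zeros elsewhere. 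A short case check on which of $a_i,b_i,a_j,b_j$ coincide (using $a_i\neq b_i$, $a_j\neq b_j$, $p_i\neq p_j$) shows this is possible only if $a_i=a_j$ (giving $p_i-p_j = e_{b_i}-e_{b_j}$) or $b_i=b_j$ (giving $p_i-p_j = e_{a_j}-e_{a_i}$); in all other configurations $p_i-p_j$ has four nonzero entries. I expect this case check to be the main grind of the proof, though it is routine.

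The second key step is a small combinatorial lemma, thinking of each pair $(a_i,b_i)$ as a point of the grid $\{1,\ldots,\sigma\}^2$: if a set of pairwise distinct grid points has the property that any two of them agree in some coordinate, then all of them lie in a common row, or all of them lie in a common column. The proof is immediate: pick two of the points; they agree in, say, the first coordinate, with common value $\rho$; any other point either also has first coordinate $\rho$, or else must agree with both of the first two points in the second coordinate, forcing those two points to coincide. Applying this lemma to the points $(a_2,b_2),\ldots,(a_\sigma,b_\sigma)$, which are pairwise distinct because the $p_i$ are, we obtain: either $a_i = a$ for all $i\geq 2$, or $b_i = b$ for all $i\geq 2$.

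In the first case, set $q' := p_1 - e_a$. From $(p_i)_a = (p_1)_a - 1\geq 0$ for $i\geq 2$ we get $(p_1)_a\geq 1$, so $q'\in\PV(k-1,\sigma)$; and $p_1 = q' + e_a$ together with $p_i = q' + e_{b_i}$ ($i\geq 2$) shows $q'$ is a child of every $p_i$, while a componentwise check gives $q' = \wedge(p_1,\ldots,p_\sigma) = q$. This is case~(1). Symmetrically, the second case yields $r := p_1 + e_b = \vee(p_1,\ldots,p_\sigma)$ with $r$ a parent of every $p_i$, which is case~(2). For $\sigma>2$ the two cases are mutually exclusive: if both held, then by the converse analysis $(c_i)_i$ and $(d_i)_i$ would each be permutations of $\{1,\ldots,\sigma\}$ with $r - q = e_{c_i} + e_{d_i}$ independent of $i$; summing these $\sigma$ identities gives $\sigma(r-q) = 2\cdot(1,\ldots,1)$, which forces $\sigma\mid 2$ and hence $\sigma=2$, a contradiction. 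Finally, for $\sigma=2$ there is only the single index $i=2$, so the local claim and the grid lemma are vacuous, both constructions above are available, and hence both (1) and (2) hold.
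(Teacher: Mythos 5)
Your proof is correct, and it reaches the conclusion by a route that differs from the paper's in its global structure. The local step is essentially the same in both: your ``local claim'' (two neighbors of $p_1$ that are also neighbors of each other must share the removed index $a$ or the added index $b$) is exactly the paper's analysis of a 3-clique $p, p+e_i-e_j, p+e_{i'}-e_{j'}$, done by the same kind of index case check. After that the arguments diverge. The paper classifies every 3-clique inside the $\sigma$-clique as child-sharing or parent-sharing, shows two adjacent 3-cliques must have the same type, and propagates this through the clique. You instead anchor everything at $p_1$, encode each $p_i$ as a point $(a_i,b_i)$, and invoke a small sunflower/Helly-type grid lemma (pairwise agreement in a coordinate forces a common row or column) to get a common $a$ or common $b$ directly; you then explicitly verify that $p_1-e_a$ (resp.\ $p_1+e_b$) coincides with the meet $q$ (resp.\ join $r$), a point the paper leaves implicit, and you obtain the mutual exclusivity for $\sigma>2$ by a clean counting argument ($r-q=e_{c_i}+e_{d_i}$ with $(c_i)$, $(d_i)$ permutations, so $\sigma(r-q)=2\cdot\mathbf{1}$), which also explains transparently why $\sigma=2$ is the exceptional case; the paper instead derives exclusivity from the 3-clique typing and the adjacent-3-clique argument. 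One small inaccuracy: your summary that in ``all other configurations $p_i-p_j$ has four nonzero entries'' is not literally true --- if exactly one of $a_i=b_j$, $b_i=a_j$ holds the difference has three nonzero entries, and if both hold it is $2e_{b_i}-2e_{a_i}$ --- but in every such configuration the difference still fails to be of the form $e_s-e_t$ (this is precisely the paper's Cases 1--3), so the claim you actually need survives the corrected case check and the proof stands.
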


\begin{proof}
The converse direction follows easily from the definition. The $\sigma = 2$ case is also immediate. For what follows, notice that the common parent and child of two neighbors is unique. 

For $\sigma \geq 3$ we identify Parikh vectors with the corresponding vertices in the PdB graph. The proof strategy is the following: we first prove that any 3-clique of vertices share a common parent or a common child but not both. Then we show that two adjacent 3-cliques (i.e. sharing 2 vertices) can only form a 4-clique if they are either both child-sharing, or both parent-sharing. This then implies that within an $m$-clique, all 3-cliques are of the same type, completing the proof.

Let the Parikh vectors of a 3-clique be $p$, $q$ and $r$. Then $\exists i, i', j, j'$ s.t. $q = p+e_i-e_j$ and $r=p+e_{i'}-e_{j'}$, $i\neq j$, $i'\neq j'$. The three vertices share a common parent if and only if $j=j'$, and they share a common child if and only if $i=i'$, and these equalities are mutually exclusive. Now suppose neither $i=i'$ nor $j=j'$ holds, then we distinguish three cases. Case 1: $i, i', j, j'$ are pairwise distinct: then $q$ and $r$ differ in 4 coordinates and cannot be neighbors. Case 2: $i=j'$, and $i, i', j$ are pairwise distinct. Then $q$ and $r$ differ in 3 coordinates and are not neighbors. Case 3: $i=j'$, $i'=j$. Then $q = r+2e_i - 2e_j$ and so they are not neighbors. Thus in each case $q$ and $r$ are not neighbors, so that $p$, $q$, $r$ cannot form a clique, a contradiction.

For two adjacent 3-cliques of different type, let the 3-clique with child be $p+e_i, p+e_{i'}, p+e_{i''}$ and the 3-clique with parent be $q-e_j, q-e_{j'}, q-e_{j''}$, and suppose $p+e_i = q-e_j$, $p+e_{i'} = q-e_{j'}$ are the common vertices. Then $e_{i'} + e_{j'} = e_i + e_j$, thus $i=j'$, $i'=j$. Since $i''$ is different from all of $i, i', j, j'$, the vertices $p+e_{i''}$ and $q-e_{j''}$ differ in coordinates $i$, $i'$ and $i''$ and cannot be neighbors.

For an $m$-clique with $m>3$, if all 3-cliques are child-sharing, then this child has to be the same for all vertices of the $m$-clique and we are done. The same is true if all 3-cliques are parent-sharing. If there were both child-sharing and parent sharing 3-cliques within the $m$-clique, then there would also have to be adjacent 3-cliques of different type, which we saw is impossible. \qed

\end{proof}


\subsection{The undirected Parikh-de-Bruijn grid}\label{sec:pdb-grid}

Recall that the de Bruijn graph of order $k$ over alphabet $\Sigma$ is defined as a directed graph $dB(k,\Sigma) = (V,E)$, where $V = \Sigma^k$ the set of all $k$-length strings over $\Sigma$, and $E = \{ (u,v) \mid u_2\ldots u_k = v_1\ldots v_{k-1}\}$. The in-degree and out-degree of every node is $\sigma$, and $E$ corresponds to $\Sigma^{k+1}$. 

We are now ready to define the Parikh-de-Bruijn grid. We will need two different variants, an undirected and a directed one. We start with the undirected variant, and postpone the directed variant until later (Def.~\ref{def:pdb-grid-directed}). 

\def\A{{H}}
\def\G{{G}}

\begin{definition}[Undirected Parikh-de-Bruijn grid]\label{def:pdb-grid-undirected}
Let $k,\sigma\geq 1$. The {\em (undirected) $(k,\sigma)$-Parikh-de-Bruijn grid} ({\em \pdb-grid} for short) is a graph $\A(k,\sigma) =(V,E)$ such that $V = \PV(k,\sigma)$ and $E = \{ pq \mid p \text{ and } q \text{ are neighbors}\}$.\footnote{Here and elsewhere, in slight abuse of notation, we identify Parikh vectors with the vertices representing them.}
\end{definition}

This graph is analogous to the de Bruijn graph in that the edges represent single-character shifts. However, it is undirected, and, as opposed to the de Bruijn graph, the \pdb-grid is not regular. In particular, the degree of a vertex $p$ is $\gamma(\sigma -1)$, where $\gamma$ is the number of non-zero coordinates of $p$. This is because we can choose one character $\a_i$ and a different second character $\a_j$ for a neighbor $q = p - e_i + e_j$. A second important difference to the de Bruijn graph of order $k$ is that there is no correspondence between edges and $(k+1)$-order Pv's. What we do find is a more intricate correspondence, for which we will need a different view of the \pdb-grid.

Let us consider the case $\sigma =3$. Lemma~\ref{lemma:clique} tells us that any triangle in the \pdb-grid of order $k$ will have one common child or one common parent. Consider the drawing of $\A(4,3)$ in Fig.~\ref{fig:pdb43} (left). We find that every triangle which points downward consists of a clique with a common parent, e.g.\ $(2,2,0),(2,1,1),(1,2,1)$, with common parent $(2,2,1)$, while every upward-pointing triangle consists of one with a common child, e.g.\ $(2,2,0),(1,3,0),(1,2,1)$, with common child $(1,2,0)$. We have marked the Pv's in different colors; note that those $(k+1)$-order Pv's which have fewer than $\sigma=3$ children are not enclosed by $3$ vertices, so they cannot be identified with a triangle, but they are drawn as incident to their children (which still form a clique). 

\begin{figure}
\begin{minipage}{6cm}
\begin{center}
\includegraphics[width=\textwidth]{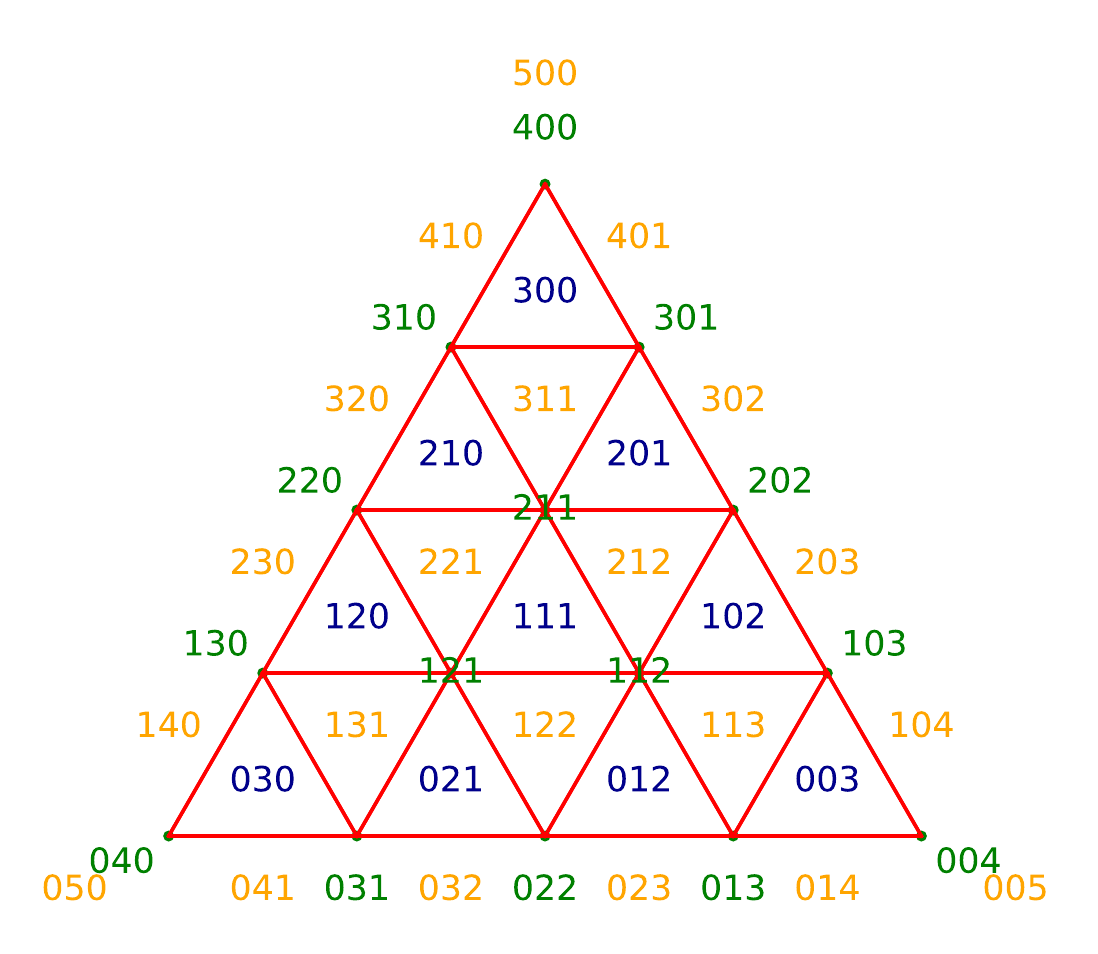}
\end{center}
\end{minipage}
\begin{minipage}{5.8cm}
\begin{center}
\includegraphics[width=\textwidth]{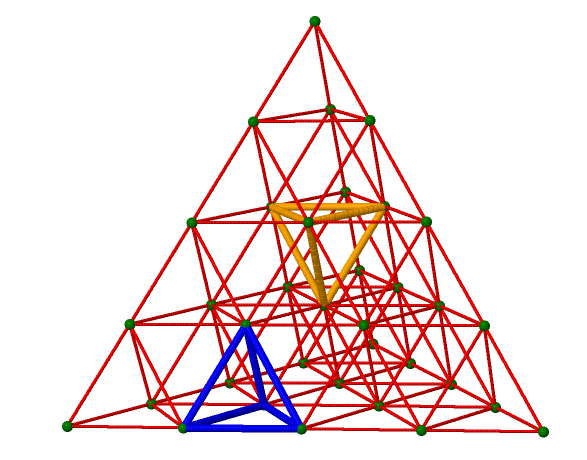}
\end{center}
\end{minipage}
\caption{The \pdb-grid for $k=4$ and $\sigma=3$ (left) resp.\ $\sigma=4$ (right). $k$-order Pv's are denoted in green (vertices), $(k+1)$-order Pv's in yellow (downward triangles resp.\ tetrahedra), and $(k-1)$-order Pv's in blue (upward triangles resp.\ tetrahedra). \label{fig:pdb43}}
\end{figure}

For $\sigma=4$, the $(k+1)$- and $(k-1)$-order Parikh vectors correspond to downward- resp.\ upward-pointing tetrahedra (Fig.~\ref{fig:pdb43}, right). For $\sigma=5$, they will correspond to $5$-cells or pentachora: a pentachoron is a 4-simplex, i.e.\ the convex hull of 5 affinely independent points in $\IR^4$, and has five facets, which are pairwise adjacent tetrahedra, sharing a facet (a triangle) each. 

In general, we can identify the \pdb-grid $\A(k,\sigma)$ with a simplicial complex ${\mathcal S}$ in the following way. Consider the hyperplane ${\mathcal H}$ through the point set $\{k\cdot e_i \mid i=1,\ldots,\sigma\}$. The intersection of ${\mathcal H}$ with $\IN^{\sigma}$, where $\IN$ denotes the set of non-negative integers, consists exactly of the points in $\PV(k,\sigma)$: these are the vertices of our \pdb-grid, and the $0$-simplices of ${\mathcal S}$. The $2$-simplices are given by the edges of the graph, while in addition, we also have $(\sigma-1)$-simplices given by the $\sigma$-cliques in the graph. (By definition, the lower-dimension faces, e.g.\ triangles for $\sigma=4$, are also elements of ${\mathcal S}$, but they are not needed for our purposes. For basic definitions on simplicial complexes, see~\cite{Alexandrov56}). Since all points lie in ${\mathcal H}$, which has dimension $\sigma -1$, we can embed ${\mathcal S}$ in $\IR^{\sigma -1}$. See Fig.~\ref{fig:diagsection} for an illustration. 

\begin{figure}
\begin{center}
\includegraphics[width=.7\textwidth]{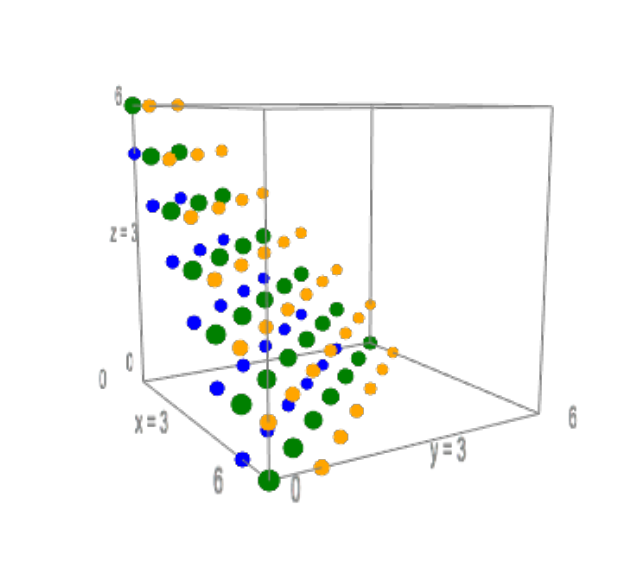}
\end{center}
\caption{The diagonal section of the integer grid with the hyperplanes ${\mathcal H}_k$ (green), ${\mathcal H}_{k+1}$ (blue), and ${\mathcal H}_{k-1}$ (yellow), for $k=6$ and $\sigma=3$. 
\label{fig:diagsection}}
\end{figure}

In the following, we will not distinguish between the \pdb-grid as an abstract graph and the geometric simplicial complex ${\mathcal S}$.


\subsection{The directed Parikh-de-Bruijn grid}\label{sec:pdb-grid2}

We now turn to strings in the \pdb-grid. For this, we need a directed variant of the \pdb-grid which will also include loops. We will call these loops {\em bows} due to the fact that in their geometric realization, they will have a different form from the classic loop. 

\begin{definition}[Labeled Parikh-de-Bruijn grid with loops]\label{def:pdb-grid-directed}
Let $k,\sigma\geq 1$. The {\em directed $(k,\sigma)$-Parikh-de-Bruijn grid} ({\em directed \pdb-grid} for short) is a directed edge-labeled multigraph $\G(k,\sigma) = (V,E)$, where

\begin{itemize}
\item there are two anti-parallel edges between each pair of neighboring vertices $p,q$, where $p = q - e_i + e_j$; edge $(p,q)$ is labeled $(\a_i,\a_j)$, and edge $(q,p)$ is labeled $(\a_j,\a_i)$; 
\item for every vertex $p$, and every non-zero entry $p_i$ of $p$, there is an edge $(p,p)$ labeled $(\a_i, \a_i)$; we call these edges {\em bows}. 
\end{itemize}
\end{definition}

Note that $\A(k,\sigma)$ is the underlying undirected unlabeled simple graph without loops. In $\G(k,\sigma)$, every edge is labeled by the exchange of characters that happens when moving a window of size $k$ along a string from one Pv to the next. When the two Pv's are distinct, there is only one way to do this, hence the unique directed edge from $p\neq q$;  when the two consecutive Pv's are the same, then there are as many ways of doing this as there are characters occurring in it. 

We need a definition which connects strings and walks in $\G(k,\sigma)$, analogously to de Bruijn graphs. Note that we identify a walk in the directed \pdb-grid $\G(k,\sigma)$ with its sequence of vertices, disregarding the edges. This  definition allows using different edges whenever there are consecutive occurrences of the same vertex (since these are the only cases of multiple parallel edges). 

\begin{definition}
Let $w=w_1\ldots w_n$ be a string over $\Sigma$ and $W=(p_1,\ldots,p_m)$ a walk in $\G(k,\sigma)$. Then $W$ {\em spells} $w$ if and only if $m=n-k+1$ and for every $i=1,\ldots,m$, $\pv(w_i\ldots w_{i+k-1}) = p_i$. In particular, if $W$ spells $w$, then, for every $i=1,\ldots,m-1$, there is an edge $(p_i,p_{i+1})$ with label $(w_i,w_{i+k})$. A walk is called {\em realizable} if it spells some string.
\end{definition}

\begin{example}
Let $\sigma = 3, k=4$. Consider the string ${\tt aabacabb}$. The walk it induces in the $(4,3)$-\pdb-grid is shown in Fig.~\ref{fig:ex2}. Notice that each shift by one in the string corresponds to a step along an edge. If this edge connects two distinct vertices, then the $(k+1)$- and $(k-1)$-order Pv's are given by the two triangles incident to the edge. This is the case in the first and last step in our example (marked in blue, resp.\ brown). If the edge connects a vertex $p$ with itself, then the induced $(k+1)$- and $(k-1)$-order Pv's are given by triangles at the {\em opposite} direction w.r.t.\ $p$. This is the case in our example in the second (red) and third (green) steps. 
\end{example}

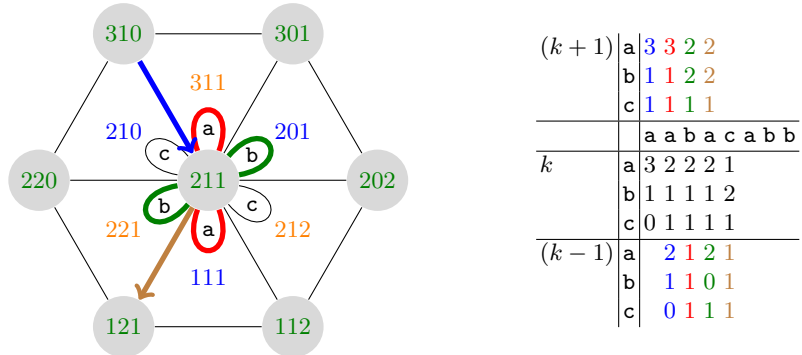
\begin{figure}
\begin{minipage}{6cm}
\begin{center}
\begin{tikzpicture}[scale=.9]
\tikzstyle{every node} = [circle, fill=gray!30, text=darkgreen]
\def\elength{2.5}

\node (v) at (0, 0) {211};
\node (b0) at +(0: \elength) {202};
\node (b1) at +(60: \elength) {301};
\node (b2) [thick] at +(120: \elength) {310};
\node (b3) at +(180: \elength) {220};
\node (b4) at +(240: \elength) {121};
\node (b5) at +(300: \elength) {112};

\draw [-] (b0) -- (b1);
\draw [-] (b1) -- (b2);
\draw [-] (b2) -- (b3);
\draw [-] (b3) -- (b4);
\draw [-] (b5) -- (b0);
\draw [-] (b4) -- (b5);

\def\loose{7}

\draw [-,darkgreen,line width=.7mm,out=10,in=50,looseness=\loose] (v) to (v);
\draw [-,red,line width=.7mm,out=70,in=110,looseness=\loose] (v) to [out=70,in=110,looseness=\loose] (v);
\draw (v) to [out=130,in=170,looseness=\loose] (v);
\draw [-,darkgreen,line width=.7mm,out=190,in=230,looseness=\loose] (v) to (v);
\draw [-,red,line width=.7mm,out=250,in=290,looseness=\loose] (v) to (v);
\draw (v) to [out=310,in=350,looseness=\loose] (v);

\draw [-] (v) -- (b0);
\draw [-] (v) -- (b1);
\draw [<-,blue,line width=.7mm] (v) -- (b2);
\draw [-] (v) -- (b3);
\draw [->,brown,line width=.7mm] (v) -- (b4);
\draw [-] (v) -- (b5);

\tikzstyle{every node}=[align=center, text=black]

\node at +(30: .3*\elength) {\tt{b}};
\node at +(150: .3*\elength) {\tt{c}};
\node at +(270: .3*\elength) {\tt{a}};

\node at +(210: .3*\elength) {\tt{b}};
\node at +(330: .3*\elength) {\tt{c}};
\node at +(90: .3*\elength) {\tt{a}};

\tikzstyle{every node}=[align=center, text=blue]

\node at +(30: .577*\elength) {201};
\node at +(150: .577*\elength) {210};
\node at +(270: .577*\elength) {111};

\tikzstyle{every node}=[align=center, text=orange]
\node at +(90: .577*\elength) {311};
\node at +(210: .577*\elength) {221};
\node at +(330: .577*\elength) {212};

\end{tikzpicture}
\end{center}
\end{minipage}
\begin{minipage}{6cm}
\begin{center}
\begin{tabular}{l | l | c c c c c c c c}
$(k+1)$ & \a & \cola 3  & \colc 3 & \colb 2 & \cold 2\\
 &\b & \cola 1  & \colc 1 & \colb 2 & \cold 2\\
 &\c & \cola 1  & \colc 1 & \colb 1 & \cold 1\\
 \hline
 && \a & \a & \b & \a & \c & \a & \b & \b \\
 \hline
$k$ &\a & 3 & 2 & 2 & 2 & 1\\
 & \b & 1 & 1 & 1 & 1 & 2\\
 &\c & 0 & 1 & 1 & 1 & 1\\
\hline
$(k-1)$ &\a& & \cola 2 & \colc 1 & \colb 2 & \cold 1\\
 & \b& & \cola 1 & \colc 1 & \colb 0 & \cold 1\\
 &\c& & \cola 0 & \colc 1 & \colb 1 & \cold 1\\
\end{tabular}
\end{center}
\end{minipage}

\caption{Left: The part of the $(4,3)$-\pdb-grid being visited by the string ${\tt aabacabb}$. We marked each step of the walk in a different colour (blue-red-green-brown). Right: We list vertically the corresponding $(k+1)$-order Pv's (top), $k$-order Pv's (center), and $(k-1)$-order Pv's, each at its beginning position. 
\label{fig:ex2}}
\end{figure}
Notice in particular that the sequence of edges used by the walk in the \pdb-grid of order $k$ determines not only the vertices touched (the $k$-order Pv's), but also which $(k+1)$- and $(k-1)$-order Pv's are visited: those are the ones which are incident to the edges of the walk. For higher $\sigma$, even lower and higher order Pv's have corresponding substructures. For example, for $\sigma=4$, the octahedra which are enclosed by 6 $k$-order Pv's correspond to their common ''grandchildren''. 

\begin{proposition}
Let $w$ be a string over $\Sigma$ and $k\geq 1$. Then there is a walk in $\G(k,\sigma)$ which spells $w$. On the other hand, not every walk spells a string. 
\end{proposition}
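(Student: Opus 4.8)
The plan is to prove the two halves separately; the first is a straightforward unwinding of the definitions, while the second is witnessed by one small explicit walk.

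For the first half I would assume $n=|w|\geq k$ (otherwise there are no length-$k$ windows and ``spells'' is vacuous), set $m=n-k+1$, and put $p_i=\pv(w_i\ldots w_{i+k-1})$ for $i=1,\ldots,m$. Each $p_i$ lies in $\PV(k,\sigma)$, hence is a vertex of $\G(k,\sigma)$, and it remains to check that consecutive $p_i,p_{i+1}$ are joined by an edge. Sliding the window by one removes $w_i$ and inserts $w_{i+k}$, so $p_{i+1}=p_i-e_a+e_b$ with $\a_a=w_i$ and $\a_b=w_{i+k}$. If $w_i\neq w_{i+k}$ then $a\neq b$, so $p_i$ and $p_{i+1}$ are neighbors and $\G(k,\sigma)$ contains the directed edge from $p_i$ to $p_{i+1}$ (labeled by the pair of characters exchanged at this step). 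If $w_i=w_{i+k}$ then $p_{i+1}=p_i$; since $\a_a=w_i$ occurs in the window $w_i\ldots w_{i+k-1}$ we have $(p_i)_a\geq 1$, so $\G(k,\sigma)$ contains the bow $(p_i,p_i)$ labeled $(\a_a,\a_a)$. Either way the step is legal, so $W=(p_1,\ldots,p_m)$ is a walk, and it spells $w$ by construction.

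For the second half it suffices to display one non-realizable walk. I would take $\sigma=2$, $k=2$, so that $\PV(2,2)=\{(2,0),(1,1),(0,2)\}$ and $(2,0),(1,1)$ are neighbors, and consider the walk $W=\big((2,0),(1,1),(2,0)\big)$; this is a legal walk, since its two steps use the two anti-parallel edges between the neighbors $(2,0)$ and $(1,1)$. If some $w=w_1w_2w_3w_4$ were spelled by $W$, then $\pv(w_1w_2)=(2,0)$ forces $w_1=w_2=\a$, $\pv(w_3w_4)=(2,0)$ forces $w_3=w_4=\a$, whence $\pv(w_2w_3)=(2,0)\neq(1,1)$, contradicting $\pv(w_2w_3)=p_2=(1,1)$. (The same idea produces a non-realizable walk $\big((k,0,\ldots,0),(k-1,1,0,\ldots,0),(k,0,\ldots,0)\big)$ for every $k\geq 2$ and $\sigma\geq 2$; for $k=1$ the assertion would in fact fail, as every walk is then realizable, so $k\geq 2$ is the relevant regime.)

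I do not expect a real obstacle. The only points that need care are in the first half: the degenerate case of two consecutive windows with equal Parikh vectors, where one must verify that the appropriate bow is present — this is exactly where one uses that the repeated character occurs in the window — and the vacuous case $|w|<k$. The second half merely requires spotting a small enough witness, and the $\sigma=2$, $k=2$ example above is essentially minimal.
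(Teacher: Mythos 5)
Your proof is correct and follows essentially the same route as the paper: the first half is the same direct unwinding of the neighbor/bow definitions, and the second half exhibits a back-and-forth walk between a corner and a neighbor (the paper uses $(3,0,0),(2,1,0),(3,0,0)$; your $(2,0),(1,1),(2,0)$ is the same idea). Your extra care about bows, the vacuous case $|w|<k$, and the $k=1$ caveat are reasonable refinements but do not change the approach.
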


\begin{proof} If $w$ is a string, then the definition of neighbor means that there is a walk in $\G(k,\sigma)$. To see that the other direction does not hold, consider the walk $(3,0,0),(2,1,0),(3,0,0)$, which does not spell any string. \qed
\end{proof}

The following characterization of realizable walks follows directly from the definitions. As we will see, it restricts very strongly which walks are realizable and which are not. 

\begin{theorem}[Characterization of realizable walks]
A walk $W=(p_1,\ldots,p_m)$ in $\G(k,\sigma)$ is realizable if and only if, for all $i=1,\ldots, m-k-1$, 
there is a character $\c_i$ s.t.\ there is an edge labeled $(\x,\c_i)$ from $p_i$ to $p_{i+1}$ and an edge labeled $(\c_i,\y)$ from $p_{i+k}$ to $p_{i+k+1}$, where $\x$ and $\y$ are arbitrary characters from $\Sigma$. 
\end{theorem}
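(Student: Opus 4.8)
The forward implication is immediate from the definitions. Suppose $W$ spells $w=w_1\cdots w_n$ with $n=m+k-1$. Shifting the length-$k$ window from position $\ell$ to position $\ell+1$ deletes $w_\ell$ and appends $w_{\ell+k}$, so for each $\ell=1,\ldots,m-1$ there is an edge $(p_\ell,p_{\ell+1})$ with label $(w_\ell,w_{\ell+k})$. Taking $\ell=i$ gives an edge from $p_i$ to $p_{i+1}$ whose second label is $w_{i+k}$, and taking $\ell=i+k$ gives an edge from $p_{i+k}$ to $p_{i+k+1}$ whose first label is $w_{i+k}$. Hence, for every $i=1,\ldots,m-k-1$, the choice $\c_i:=w_{i+k}$ (with $\x=w_i$ and $\y=w_{i+2k}$) witnesses the stated condition.

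For the converse, the plan is to build a witnessing string $w=w_1\cdots w_{m+k-1}$ directly from the data. The positions $k+1\le j\le m-1$ are covered by two overlapping windows: such a $w_j$ is simultaneously the character appended at step $j-k$ and the character deleted at step $j$. I would set $w_j:=\c_{j-k}$ there; the hypothesis at index $j-k$ says precisely that $\c_{j-k}$ occurs as the second label of some edge $(p_{j-k},p_{j-k+1})$ and as the first label of some edge $(p_j,p_{j+1})$, which is exactly the compatibility needed for one character to play both roles. For a step $p_\ell\to p_{\ell+1}$ with $p_\ell\neq p_{\ell+1}$ the connecting edge is unique, so the deleted and appended characters are forced and I set them accordingly; for a bow step I have a free choice among the characters present in $p_\ell$, which I use to agree with whatever has already been committed. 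Finally I would fill the first window $w_1\cdots w_k$ last, choosing its still-free slots so that $\pv(w_1\cdots w_k)=p_1$; this is possible as long as the characters already forced inside that window do not over-fill any coordinate of $p_1$, which I would check from the walk. Correctness then follows by induction on $j$: the base case is the first window, and since by construction $p_{j+1}-p_j=e_{\mathrm{ind}(w_{j+k})}-e_{\mathrm{ind}(w_j)}$, the inductive step yields $\pv(w_{j+1}\cdots w_{j+k})=p_{j+1}$.

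The delicate point — the part I expect to require real care — is showing that all these local choices can be made globally consistent. The trouble comes from the bow steps: at a repeated vertex the walk alone does not say which character is deleted and appended, and a single bow may be pinned down by the hypothesis at two different indices (namely $i$ and $i+k$) as well as by the forced characters on either side of it. I would organize the verification along the $k$ residue classes of positions — within each class the character assignments propagate one step at a time — and argue, using the Parikh-vector identities above, that the requirements coming from the $\c_i$'s and from the proper shifts never clash along a class. The two ends of the walk (the first and last $k$ positions, where only a window's Parikh vector and not an edge constrains the characters) are the other place where the bookkeeping has to be done explicitly.
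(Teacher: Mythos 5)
The forward implication in your proposal is correct, and it is the only part of the theorem that truly ``follows from the definitions'' (the paper, in fact, offers no proof beyond that remark). The genuine gap is exactly the step you flag and postpone: the claim that the requirements coming from the $\c_i$'s and from the forced shifts ``never clash along a residue class'' is not a bookkeeping detail you can supply later --- it is false. The hypothesis lets each $\c_i$ be chosen independently, whereas realizability propagates equalities along an entire residue class modulo $k$: whenever step $i+k$ is a bow, the spelled string must satisfy $w_{i+k}=w_{i+2k}$, so the choices made at indices $i$ and $i+k$ must coincide; and in addition the first window pins down the start of each class through $\pv(w_1\cdots w_k)=p_1$. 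Neither of these global constraints is implied by the local existence statements, so your assignment $w_j:=\c_{j-k}$ (or any other assignment) cannot in general be made consistent.

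Concretely, take $\sigma=2$, $k=2$ and the walk $W=\bigl((2,0),(2,0),(1,1),(1,1),(1,1),(2,0)\bigr)$, so $m=6$ and the condition is required for $i=1,2,3$. It holds: for $i=1$ both relevant steps are bows and the bow labeled $({\tt a},{\tt a})$ exists at $(2,0)$ and at $(1,1)$; for $i=2$ and $i=3$ one of the two steps is the unique edge between $(2,0)$ and $(1,1)$, which forces one letter, and the other step is a bow at $(1,1)$, which carries bows for both letters and therefore matches it. Yet $W$ spells no string: windows $1$ and $2$ force $w_1=w_2=w_3={\tt a}$, windows $3,4,5$ (all equal to $(1,1)$) then force $w_4={\tt b}$, $w_5={\tt a}$, $w_6={\tt b}$, while window $6=(2,0)$ forces $w_6={\tt a}$, a contradiction. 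So the stated condition is necessary but not sufficient, and no completion of your converse argument is possible for the theorem as written. What your residue-class bookkeeping is implicitly reaching for is a genuinely stronger hypothesis: a single choice of one edge per step whose appended label at step $\ell$ equals its deleted label at step $\ell+k$ for all $\ell\le m-k-1$, together with the end condition that the deleted labels of the first $\min(k,m-1)$ steps fit inside $p_1$; with such a globally consistent choice your inductive verification does go through, and that (not the local condition of the statement) is the correct characterization your proof would establish.
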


The following lemma concerns walks that do not use bows, i.e.\ walks for which $p_i \neq p_{i+1}$ for all $i$: 

\begin{lemma}[Bowfree walks]\label{lemma:bowfree}
Let $W = (p_1,\ldots, p_n)$ be a realizable walk in $\G(k,\sigma)$ which does not use any bows. 
  \begin{enumerate}
\item  Then, if $w=w_1\ldots w_{n+k-1}$ is a string spelled by $W,$ then for every $i=1,\ldots, n-1$, $w_i \neq w_{i+k}$. 
\item If $p_i = k\cdot e_j$ for some $j$, then, for $q=p_{i+k}$, it follows that $q_j = 0$, i.e.\ $q$ lies on the opposite face of $\G(k,\sigma)$ to the corner $k\cdot e_j$. 
\end{enumerate}
\end{lemma}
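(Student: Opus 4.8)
The plan is to work directly with the edge labels of the walk $W$, exploiting the definition of ``bowfree'' together with the definition of what it means for $W$ to spell $w$. The key observation is that an edge $(p_i,p_{i+1})$ with $p_i\neq p_{i+1}$ carries a label $(\a_a,\a_b)$ with $a\neq b$; and by the definition of spelling, this label must equal $(w_i,w_{i+k})$. So the hypothesis ``$W$ uses no bows'' translates into ``for every $i$, the label of the $i$-th edge is $(\x,\y)$ with $\x\neq\y$'', which gives part~(1) almost immediately: $w_i = \x \neq \y = w_{i+k}$.

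For part~(1), I would spell this out: fix $i\in\{1,\dots,n-1\}$. Since $p_i\neq p_{i+1}$ and they are neighbors, write $p_{i+1} = p_i - e_a + e_b$ for some $a\neq b$; the unique edge $(p_i,p_{i+1})$ is labeled $(\a_a,\a_b)$. Because $W$ spells $w$, this edge-label must be $(w_i,w_{i+k})$, so $w_i = \a_a$ and $w_{i+k}=\a_b$, hence $w_i\neq w_{i+k}$ since $a\neq b$. (I would note the consistency with the Characterization-of-realizable-walks theorem, but this step does not actually need it.)

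For part~(2), suppose $p_i = k\cdot e_j$ and let $q = p_{i+k}$. Take any string $w$ spelled by $W$ (one exists since $W$ is realizable). Then $p_i = \pv(w_i\ldots w_{i+k-1}) = k\cdot e_j$ forces $w_i = w_{i+1} = \cdots = w_{i+k-1} = \a_j$. Now $q = p_{i+k} = \pv(w_{i+k}\ldots w_{i+2k-1})$, so $q_j$ counts the occurrences of $\a_j$ among $w_{i+k},\dots,w_{i+2k-1}$. For each such position $w_{i+k+t}$ with $0\le t\le k-1$, apply part~(1) with index $i+t$ (valid since $i+t \le i+k-1 \le n-1$, using $m \geq i+k$ which holds because the window $w_{i+k}\ldots w_{i+2k-1}$ exists): $w_{i+t} \neq w_{i+k+t}$. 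But $w_{i+t} = \a_j$ for $0\le t\le k-1$, so $w_{i+k+t}\neq \a_j$ for all these $t$, and therefore $q_j = 0$.

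I expect no serious obstacle here; the main thing to get right is the bookkeeping on indices in part~(2) — making sure that position $i+k+t$ for $0\le t\le k-1$ really does fall in the range where part~(1) applies, and that the walk is long enough that $p_{i+k}$ exists in the first place (which is implicit in the statement's hypothesis "$q = p_{i+k}$"). A secondary subtlety is to double-check that "spells" pins down the label of a non-bow edge uniquely: this is exactly the remark after Definition~\ref{def:pdb-grid-directed} that between distinct neighbors there is only one directed edge in each direction, so there is no ambiguity.
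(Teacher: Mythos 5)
Your proposal is correct and follows essentially the same route as the paper: part~(1) is the one-character shift relation between consecutive windows (the paper states it as a contradiction, you as the contrapositive via the unique edge label), and part~(2) applies this shift argument to the $k$ positions following the block $\a_j^k$, exactly as the paper does. The index bookkeeping you flag ($i+t\le n-1$ because $p_{i+k}$ exists) is sound, so there is no gap.
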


\begin{proof} {\em 1.} Let $\x = w_i = w_{i+k}$. Then $p_i= \pv(w_i\ldots w_{i+k-1}) = \pv(w_{i+1}\ldots w_{i+k}) = p_{i+1}$, so the bow at $p_i$ labeled with character $\x$ is used by the walk $W$. 

{\em 2.} Let $w$ be a word spelled by $W$. Since $p_i = k\cdot e_j$, it follows that $w_i\ldots w_{i+k-1} = {\tt a_j}^k$. Assume that $w_{i+k}\ldots w_{i+2k-1}$ contains a character $\a_j$, say at position $i+k+\ell$.  Then $p_{i+\ell} = p_{i+\ell +1}$, in contradiction to $W$ not using any bows. \qed
\end{proof}


\subsection{Realizable sets}\label{sec:realizable}

\begin{definition}[realizable sets]
A set $\Pi$ of $k$-order Parikh vectors is called {\em realizable} if there is a string $s$ such that $\Pi_k(s) = \Pi$. 
\end{definition}

Clearly, every singleton set is realizable. 

\begin{lemma}\label{lemma:neighbors}
Let $\Pi = \{p,q\}$. Then $\Pi$ is realizable if and only if $p$ and $q$ are neighbors. 
\end{lemma}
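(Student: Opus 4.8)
The plan is to prove both directions directly from the definitions and from the notion of spelling. For the forward direction, suppose $\Pi = \{p,q\}$ is realizable, so there is a string $s$ with $\Pi_k(s) = \{p,q\}$. Since $|\Pi|=2$, $s$ must have at least two distinct length-$k$ windows, so $|s| \geq k+1$. Consider consecutive windows $s_i\ldots s_{i+k-1}$ and $s_{i+1}\ldots s_{i+k}$; their Parikh vectors are neighbors or equal. Walking along $s$, the sequence of window Parikh vectors starts at some value and takes only the values $p$ and $q$, both of which actually occur (otherwise $\Pi_k(s)$ would be a singleton). Hence at some step the value changes, say from $p$ to $q$ (or $q$ to $p$); at that step the two windows are consecutive, distinct, and by the remark following the definition of neighbors they must be neighbors. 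So $p$ and $q$ are neighbors.

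For the converse, suppose $p$ and $q$ are neighbors, so $q = p - e_i + e_j$ for some $i \neq j$. Since $p$ has order $k\geq 1$ and $p_i \geq 1$ (because $p - e_i$ must have non-negative entries, as $q$ does and differs from $p-e_i$ only by adding $e_j$), write a string $u$ with $\pv(u) = p - e_i$; this has length $k-1$. I would then construct the string $s = \a_i\, u\, \a_j$ of length $k+1$. Its length-$k$ windows are exactly $\a_i u$ (the prefix), with Parikh vector $(p-e_i) + e_i = p$, and $u \a_j$ (the suffix), with Parikh vector $(p - e_i) + e_j = q$. Thus $\Pi_k(s) = \{p,q\}$ (they are distinct since $p \neq q$), so $\Pi$ is realizable.

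The only subtlety, and the one place to be slightly careful, is the argument that $p_i \geq 1$: this is needed both to form $p - e_i$ as a genuine Parikh vector and to guarantee the prefix window of $s$ has Parikh vector $p$ rather than something ill-defined. This follows because $q = p - e_i + e_j$ has non-negative $i$-th coordinate $q_i = p_i - 1 \geq 0$. Everything else is a routine check of orders and of the window Parikh vectors of the explicitly constructed length-$(k+1)$ string. No deeper machinery (not even the Clique-Lemma or the bowfree-walks lemma) is required; this is purely a warm-up observation, but stating it cleanly sets up the pattern — prepend/append single characters corresponding to the shift — that larger realizability constructions will iterate.
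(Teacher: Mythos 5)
Your proof is correct and follows essentially the same route as the paper: the forward direction extracts a position where consecutive $k$-windows carry $p$ and $q$, and the converse builds the string $\a_i u\,\a_j$ with $\pv(u)=p-e_i$, which is exactly the paper's construction $\a_i t\,\a_j$ with $\pv(t)=\wedge(p,q)$. Your explicit check that $p_i\geq 1$ is a fine (if routine) addition.
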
 

\begin{proof}
If $\Pi$ is realizable, then there is a string $w$ such that $\Pi_k(w) = \{p,q\}$. Then there must be a position $\ell$ in $w$ such that, w.l.o.g.\ $\pv(w_{\ell}\ldots w_{\ell+k-1}) = p$ and $\pv(w_{\ell+1}\ldots w_{\ell+k}) = q$. Let $w_{\ell} = \a_i$ and $w_{\ell+k} = \a_j$. Since $p\neq q$, we have $\a_i\neq \a_j$ and $q = p - e_i + e_j$. Conversely, let $q = p - e_i + e_j$, $\wedge(p,q) = r$, and $t$ any string with $\pv(t)=r$. Then the string $\a_it\a_j$ realizes $\Pi$. \qed
\end{proof}

\begin{theorem}\label{thm:connected}
A set of $k$-order Parikh vectors $\Pi$ is realizable if and only if the induced subgraph $G(\Pi)$ of the \pdb-grid $\G(k,\sigma)$ is connected. 
\end{theorem}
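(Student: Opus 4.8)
The plan is to prove both directions; the forward direction (realizable $\Rightarrow$ connected) is the easy one, and the converse (connected $\Rightarrow$ realizable) is where the real work lies. For the forward direction, suppose $\Pi = \Pi_k(s)$ for some string $s = s_1\ldots s_n$. Slide a window of length $k$ along $s$; consecutive windows have Parikh vectors that are either equal or neighbors (by the remarks following the neighbor definition), so the sequence of distinct $k$-order Pv's encountered traces out a walk in $\A(k,\sigma)$ visiting every vertex of $\Pi$. Hence $G(\Pi)$ is connected, being the image of a connected walk covering all its vertices.

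For the converse, assume $G(\Pi)$ is connected; I want to build a string $s$ with $\Pi_k(s) = \Pi$. The natural idea is to find a walk in $\G(k,\sigma)$ (using bows as needed) that visits exactly the vertices of $\Pi$ and is \emph{realizable}, then let $s$ be the string it spells. First I would pick any spanning walk of $G(\Pi)$ — since $G(\Pi)$ is connected, there is a walk $p_1, p_2, \ldots, p_m$ in the undirected grid that visits every vertex of $\Pi$ at least once (e.g.\ a walk along a spanning tree, traversing each edge twice). The difficulty is that such a walk need not be realizable: by the Characterization of realizable walks, the edge used at step $i$ and the edge used at step $i+k$ must be ``compatible'' via a common character $\c_i$. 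The key trick for fixing this is \emph{padding}: between consecutive vertices $p_i, p_{i+1}$ of the chosen walk, I can insert a long block of bow-steps at $p_i$ (or at $p_{i+1}$), i.e.\ stay at a single vertex $p$ for many steps by repeatedly reading a character that occurs in $p$. If between every two ``real'' moves of the walk I insert at least $k$ bow-steps, then the compatibility condition at distance $k$ never couples two genuine edges of the original walk to each other — each real move's distance-$k$ partner falls inside a block of bows at some single vertex, where the required character is freely available. Concretely, for a move $p \to q = p - e_i + e_j$, one reads $\a_i$, and to feed the window $k$ steps later one needs $\a_i$ again; staying at a vertex containing $\a_i$ for the intervening steps supplies it. I would make this precise by explicitly constructing $s$ as a concatenation: start with $\a_{j_0}^{\,k}$ for some letter present in $p_1$ to set up the first window (or more carefully, a string with Parikh vector $p_1$ followed by padding), then for each edge $p_i \to p_{i+1}$ of the spanning walk append one ``transition character'' followed by a padding block of $\geq k$ copies of a fixed letter occurring in $p_{i+1}$.

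The two things to verify after the construction are: (a) every window of length $k$ in $s$ has Parikh vector in $\Pi$ — this holds because, with enough padding, every window either sits entirely inside one padding-plus-transition region, where it sweeps along the single grid edge $p_i \to p_{i+1}$ and so has Pv $p_i$ or $p_{i+1}$ (both in $\Pi$), or is one of the finitely many explicitly controlled boundary windows; and (b) every vertex of $\Pi$ actually occurs as a window Pv — this is immediate since the spanning walk visits each one and the corresponding $k$-block of that letter's padding realizes it. The main obstacle is bookkeeping in (a): one has to argue carefully that no ``bad'' Parikh vector sneaks in at the junctions between a transition character and the adjacent padding blocks, and that the padding length $k$ (perhaps $k$ or $2k$ to be safe) genuinely suffices to decouple all the distance-$k$ interactions. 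Everything else — the forward direction, the existence of a spanning walk, the ability to pad with bows — is routine given Lemma~\ref{lemma:bowfree} and the walk characterization.
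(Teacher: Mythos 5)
Your forward direction is fine, and your overall instinct (use bows to decouple the distance-$k$ constraints along a spanning walk of $G(\Pi)$) is the same one the paper uses. The genuine gap is in how you implement the bows: a bow at a vertex $p$ labeled $(\a_i,\a_i)$ requires the character \emph{leaving} the window to equal the character entering it, and when you are spelling a string the leaving character is not free --- it is whatever letter sits $k$ positions back. So you cannot ``stay at a single vertex $p$ by repeatedly reading a character that occurs in $p$''; staying at $p$ forces each appended letter to copy the letter $k$ positions earlier, i.e.\ the padding must cycle the current window's content, not repeat one fixed letter. With your concrete recipe (a block of $\geq k$ copies of a fixed letter $\a_c$ occurring in $p_{i+1}$), the window does \emph{not} stay at $p_{i+1}$: each step replaces the oldest window letter by $\a_c$, so the walk drifts toward the corner $k\cdot e_c$, passing through Parikh vectors that in general lie outside $\Pi$. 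For instance, with $k=4$, $\sigma=3$, $\Pi=\{(2,2,0),(1,2,1)\}$, starting from $\a\a\b\b$ and padding with $\a^4$, the window $\b\a\a\a$ already has Pv $(3,1,0)\notin\Pi$. There is a second failure at the junctions: after a constant block $\a_c^{k}$ the only available transitions drop $\a_c$, so if the next spanning-walk edge is $p_i \to p_{i+1}=p_i-e_a+e_b$ with $a\neq c$ you cannot take it by appending the ``transition character'' $\a_b$; the letter about to exit must be $\a_a$, and constant padding gives you no control over that. Consequently claim (a) of your verification (``every window in a padding-plus-transition region has Pv $p_i$ or $p_{i+1}$'') is false, and claim (b) rests on the incorrect premise that a $k$-block of a single letter realizes $p_{i+1}$ (true only at corners $k\cdot e_c$).

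The repair is exactly the paper's argument, which is close to what you intended: induct along a bowfree walk $(p_1,\ldots,p_n)$ covering $G(\Pi)$. Given a string $u=u_1\cdots u_m$ whose last window has Pv $p_{n-1}$ and the edge $p_n=p_{n-1}+e_i-e_j$, locate an occurrence $u_{m-g}=\a_j$ in the last window and append $u_{m-k+1}u_{m-k+2}\cdots u_{m-g-1}\,\a_i$. The copied letters each equal the letter $k$ positions back, so every intermediate step is a bow at $p_{n-1}$ (no new Pv's appear), and the final appended $\a_i$ is added exactly when the exiting letter is $\a_j$, producing $p_n$. This ``copy the window until the right letter is about to leave, then switch'' step is the missing ingredient; with constant-letter padding, however long, your construction does not yield $\Pi_k(s)=\Pi$.
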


\def\iti{{\textit iti}}

\begin{proof}
Given a walk $W$, denote by $\iti(W)$ the bowfree walk obtained by replacing all consecutive multiple occurrences of the same vertex by a single occurrence (the {\em itinerary of $W$}). We will show that for any bowfree walk $U$, there is a walk $W$ and a string $w$ such that $W$ spells $w$ and $U = \iti(W)$. This proves the theorem, since in a connected induced subgraph there always exists a bowfree walk covering all vertices of this subgraph.

The proof goes by induction on the length $n$ of $U$, the base case $n=1$ being trivial. Now suppose a string $u$ exists with itinerary $(p_1, \ldots p_{n-1})$. Thus the final $k$ letters of $u=u_1u_2\cdots u_m$ have Parikh vector $p_{n-1}$. Let $i$ and $j$ be such that $p_n = p_{n-1} + e_i -e_j$, so the suffix of length $k$ of $u$ contains $\a_j$, say $u_{m-g} = \a_j$ with $0\leq g\leq k-1$. Append the string $u_{m-k+1}u_{m-k+2}\cdots u_{m-g-1} \a_i$ to $u$. The new Parikh vectors obtained this way are several more copies of $p_{n-1}$ (using bows by repeating the letters from $k$ positions before) and a $p_n$ in the last suffix, giving the desired itinerary. \qed
\end{proof}

Next we turn to strings which realize all of $\PV(k,\sigma)$. 

\begin{definition}[$k$-covering strings, $k$-\pdb-strings]
Let $k, \sigma$ be positive integers and let $w$ be a string over $\Sigma$ with $|\Sigma|=\sigma$. The string $w$ is called 
\begin{itemize}
\item {\em $(k,\sigma)$-covering}, 
if, for every Parikh vector $p$ of order $k$, there is a substring $u$ of $w$ such that $\pv(u) = p$, and

\item {\em $(k,\sigma)$-\pdb-string} ({\em Parikh-de-Bruijn string of order $k$}) if, for every Pv $p$, there is exactly one substring $u$ of $w$ such that $\pv(u) = p$. 
\end{itemize}

\noindent
If $\sigma$ is clear from the context, we also say $k$-covering and $k$-PdB word.
\end{definition}

\pdb-strings are the abelian equivalent of {\em de Bruijn sequences of order $k$}: these are strings $s$ with the property that every string of length $k$ appears exactly once as a substring of $s$. De Bruijn sequences exist for every combination of $k$ and $\sigma$, since there is a one-to-one correspondence between de Bruijn sequences of order $k$ and Hamiltonian paths in the de Bruijn graph of order $k$; and they can be constructed efficiently using the fact that Eulerian paths in the $(k-1)$-order de Bruijn graph correspond to de Bruijn sequences of order $k$. 

Every \pdb-string of order $k$ corresponds to  a Hamiltonian path in the \pdb-grid. However, the converse is not true, because not every Hamiltonian path is realizable. In general, \pdb-strings do not exist for every pair $(k,\sigma)$. In the next section, we study some necessary and sufficient conditions for their existence. 


\section{Covering words and Parikh de Bruijn words}\label{sec:covering}
\label{sec:pdb-strings}

Below, we will use the convention that the first letters of the alphabet $\Sigma$ are $\a, \b, \c, \ldots$, unless otherwise stated.  

\begin{observation}
\label{obs:pdblength}
A $(k, \sigma)$-covering word has length at least $\binom{\sigma+k-1}{k} + k-1$, with equality if and only if the word is a PdB word. 
\end{observation}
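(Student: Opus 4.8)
The plan is to reduce the statement to a one-line counting argument on the length-$k$ windows of the word, and then read off the equality case from when that count is tight. First I would record the elementary fact that a substring of \emph{order} $k$ is the same thing as a substring of \emph{length} $k$, since the order of a Parikh vector equals the length of any string realizing it. Consequently a word $w=w_1\cdots w_n$ has exactly $n-k+1$ substrings of order $k$, one for each starting position $i=1,\dots,n-k+1$, namely $u_i:=w_i\cdots w_{i+k-1}$ with Parikh vector $\pv(u_i)$. Writing $N:=|\PV(k,\sigma)|=\binom{\sigma+k-1}{k}$ (the count established in Section~\ref{sec:definitions}), the $(k,\sigma)$-covering property of $w$ says precisely that $\{\pv(u_i)\mid 1\le i\le n-k+1\}=\PV(k,\sigma)$.

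For the lower bound I would then simply observe that an $(n-k+1)$-element list of Parikh vectors cannot have the $N$-element set $\PV(k,\sigma)$ as its image unless $n-k+1\ge N$, which gives $n\ge\binom{\sigma+k-1}{k}+k-1$.

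For the equality case I would argue both implications. If $n=\binom{\sigma+k-1}{k}+k-1$, then $i\mapsto\pv(u_i)$ is a surjection from an index set of size $n-k+1=N$ onto the $N$-element set $\PV(k,\sigma)$, hence a bijection; in particular distinct positions carry distinct Parikh vectors, and since any substring realizing a given order-$k$ Parikh vector $p$ has length $k$ and is therefore one of the $u_i$, every $p$ is the Parikh vector of exactly one substring of $w$, i.e.\ $w$ is a $(k,\sigma)$-\pdb-word. Conversely, if $w$ is a $(k,\sigma)$-\pdb-word of length $n$, then each of the $N$ order-$k$ Parikh vectors is realized by exactly one substring, each such substring has length $k$ and hence equals some $u_i$, and conversely each $u_i$ realizes some order-$k$ Parikh vector; so $i\mapsto\pv(u_i)$ is again a bijection onto $\PV(k,\sigma)$, forcing $n-k+1=N$, i.e.\ $n=\binom{\sigma+k-1}{k}+k-1$.

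I expect no genuine obstacle here; the only point that needs a little care is the bookkeeping distinction between substrings as \emph{strings} and substrings as \emph{occurrences} (positions): the \pdb-word definition counts distinct substrings, whereas the window count $n-k+1$ is per position, so in the equality direction one must invoke bijectivity (surjection onto an $N$-set from an $N$-set is injective) to rule out two positions carrying the same Parikh vector, and hence the same substring.
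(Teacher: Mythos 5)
Your argument is correct and is essentially the paper's own: the paper justifies the observation by the same one-line count that a word of length $n$ has at most $n-k+1$ substrings of length $k$, hence at most $n-k+1$ Parikh vectors of order $k$. Your additional spelling-out of the equality case (surjection from an $N$-element index set onto the $N$-element set $\PV(k,\sigma)$ is a bijection, and conversely) is just the same counting made explicit, which the paper leaves implicit.
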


This holds because a word of length $n$ contains at most $n-k+1$ substrings of length $k$, and thus at most $n-k+1$ Parikh vectors of order $k$. For a $(k,\sigma)$-covering word $w$ of length $n$, we refer to $n - (\binom{\sigma+k-1}{k} + k-1)$ as the {\em excess} of $w$. For $\sigma=2$, the words of the form $\a^k\b^k$ are \pdb-words, and indeed they  have excess $0$. 

\medskip

A classical de Bruijn word contains all possible strings of length $k$ and thus necessarily all possible strings of shorter lengths: $k-1$, $k-2$ etc. This is not always true for Parikh de Bruijn words: for example, the word 
$$w={\tt aaaaabbbbbcaaaadbbbcccccdddddaaaccdbcbaccaccddbddbadacddbbbb}$$

\noindent is a $(5, 4)$-PdB word that is not $4$-covering: it has no substring with Pv $(1,1,1,1)$. The following two statements show how the $k$-covering property relates to the $(k-1)$-covering property. 

\begin{proposition}
\label{prop:covering1}
If $\sigma\leq 2$ or $k\leq 3$, then a $k$-covering word is also $(k-1)$-covering.
\end{proposition}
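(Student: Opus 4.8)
The plan is to show that every Parikh vector of order $k-1$ is the child of some vector of order $k$, and then argue that a $k$-covering word, which by definition contains a substring of length $k$ with every $k$-order Pv, must in fact contain substrings of length $k-1$ with every $(k-1)$-order Pv. The key observation is that if $w$ has a substring $u$ of length $k$ with $\pv(u)=p$, then $w$ also has substrings of length $k-1$ with Parikh vectors equal to \emph{each} child of $p$ (delete either the first or the last character of $u$). So the set $\Pi_{k-1}(w)$ contains, for every $p\in\PV(k,\sigma)$, all children of $p$. Hence it suffices to prove that every $q\in\PV(k-1,\sigma)$ is a child of some $p\in\PV(k,\sigma)$ which is itself in $\Pi_k(w)$ — but since $w$ is $k$-covering, \emph{every} $p\in\PV(k,\sigma)$ is in $\Pi_k(w)$, so we only need the combinatorial fact that every $(k-1)$-order Pv has a parent, which is immediate ($q+e_1$ works, for instance). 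This already shows $\Pi_{k-1}(w)=\PV(k-1,\sigma)$ with \emph{no} hypothesis on $\sigma$ or $k$.

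Wait — that can't be the intended statement, since the hypothesis ``$\sigma\le 2$ or $k\le 3$'' would then be vacuous, and the example word $w$ preceding the proposition shows a $5$-covering (indeed PdB) word that is not $4$-covering. So the subtlety must be: a substring of $w$ of length $k-1$ need not have arisen by deleting a character from a length-$k$ substring of $w$ that lies \emph{inside} $w$; but actually it always does, unless $w$ itself has length exactly $k-1$ — and a $k$-covering word has length at least $\binom{\sigma+k-1}{k}+k-1 > k-1$ in all nontrivial cases. So I need to re-read: the point is presumably that a $k$-covering word need not be $(k-1)$-covering because... hmm. Let me reconsider: deleting the first or last letter of a length-$k$ substring $u$ of $w$ does give a length-$(k-1)$ substring of $w$, and its Pv is $p-e_i$ where $\a_i$ is that letter. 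So $\Pi_{k-1}(w)\supseteq\{p-e_i : p\in\Pi_k(w),\ p_i>0\}$, and if $\Pi_k(w)=\PV(k,\sigma)$ this set is all of $\PV(k-1,\sigma)$. The resolution must be that the \emph{example} is a PdB word where the length-$k$ windows are forced and the argument genuinely fails — which is impossible by the above — so the real content of the proposition is something I'm misreading about which direction ``covering'' goes, or the example shows $\Pi_5=\PV(5,4)$ is realizable while $(1,1,1,1)\notin\Pi_4$, contradicting my argument, so \textbf{the argument above must be wrong} and the actual proof must work differently: it must use that in a $k$-covering word, \emph{consecutive} windows are neighbors, and for $\sigma\le 2$ or $k\le 3$ one can locally show every $(k-1)$-Pv appears, whereas for larger parameters the window $(1,1,1,1)$ has a parent $p$ in $\Pi_5$ but the occurrence of $p$ in $w$ might be at an internal position whose deletions give the \emph{other} children of $p$, never $(1,1,1,1)$ — no, deleting \emph{any} letter present in $p$ works.

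I will therefore structure the proof by the small-case analysis the hypothesis demands: handle $\sigma\le 2$ directly (here $\a^k\b^k$-type structure makes it transparent), then handle $k\le 3$ by exhausting the possible shapes of $(k-1)$-order Pv's (orders $0,1,2$) and showing each such Pv $q$, being a child of a $k$-Pv $p$ that occurs in $w$, actually occurs as a length-$(k-1)$ substring — the crux being that for these tiny orders one can pick the parent $p$ so that \emph{both} end-characters of some occurrence of $p$ are available, or more precisely one invokes the Clique-Lemma / the structure of the \pdb-grid locally. The main obstacle I expect is pinning down \emph{why} the naive ``delete an end character'' argument does not already give the result unconditionally; resolving that — i.e.\ correctly identifying what fails when $\sigma\ge 3$ and $k\ge 4$, presumably that a length-$(k-1)$ substring $x$ of $w$ need not equal a deletion of a length-$k$ substring when we additionally insist the deleted character match a prescribed one — is the heart of the matter, and the proof of Proposition~\ref{prop:covering1} will consist precisely of showing that for $\sigma\le 2$ or $k\le 3$ this obstruction is absent.
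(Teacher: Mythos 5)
Your opening claim --- that if $w$ contains a length-$k$ substring $u$ with $\pv(u)=p$, then $\Pi_{k-1}(w)$ contains \emph{every} child of $p$ --- is false, and it is precisely the pitfall that the hypothesis ``$\sigma\le 2$ or $k\le 3$'' exists to avoid. Deleting the first or last character of the occurrence $u$ yields only the (at most two) children $p-e_i$ with $\a_i\in\{u_1,u_k\}$; deleting an \emph{interior} character of $u$ does not produce a substring of $w$ at all, since substrings are contiguous. Hence for $\sigma\ge 3$ a $(k-1)$-order vector $q$ can be missed even though every parent of $q$ occurs in $w$: in the $(5,4)$ example preceding the proposition, every occurrence of every parent of $(1,1,1,1)$ begins and ends with a letter whose removal gives some \emph{other} child. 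You do circle around this point, but you then explicitly reassert the wrong step (``deleting any letter present in $p$ works''), and the contradiction with the paper's own example is never actually resolved in your text.

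More importantly, once the naive argument is set aside, what remains in your proposal is a plan, not a proof: you announce that $\sigma\le 2$ will be handled ``directly'' and $k\le 3$ ``by exhausting shapes'' or ``invoking the Clique-Lemma'', but no argument is carried out, and you yourself close by saying that identifying why the obstruction is absent in these cases is still to be done. For comparison, the paper's proof is short and concrete. For $\sigma=2$: $k$-covering forces $\a^k$ and $\b^k$ to occur, hence $\a^{k-1}$ and $\b^{k-1}$ occur, and the interval (convexity) property of Parikh sets of binary strings then yields all of $\PV(k-1,2)$. For $k=3$ and arbitrary $\sigma$: $\a^3$ occurs, so $\a^2$ occurs; one of $\a\a\b$, $\a\b\a$, $\b\a\a$ occurs, so some adjacent pair $\a,\b$ occurs; by symmetry over the alphabet this realizes every Pv of order $2$ (and $k=2$ is trivial). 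Neither of these arguments appears in your proposal, so as it stands it does not establish the proposition.
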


\begin{proof}
Let $w$ denote the $k$-covering word. If $\sigma = 2$, then $(0, k), (k,0)\in \Pi_k(w)$, meaning $\a^k$ and $\b^k$ occur in $w$, implying that $\a^{k-1}$ and $\b^{k-1}$ also occur in $w$ and thus $(0, k-1), (k-1,0) \in \Pi_{k-1}(w)$. By the convexity property\footnote{Also referred to as interval property or continuity property: For a binary string $w$, if for some $x<y<k$, $(x,k-x), (y,k-y) \in \Pi(w)$, then also $(z,k-z)\in \Pi(w)$ for every $x\leq z \leq y$. Rediscovered many times, can be referred to as ``folklore''.} of Parikh sets over a 2-letter alphabet, $\Pi_{k-1}(w)$ contains all possible Parikh vectors of order $k-1$.

For general alphabets, the case of $k=2$ is trivial. If $k=3$, then by symmetry, it is enough to show that $\a^2$ occurs in $w$ and that either $\a\b$ or $\b\a$ occurs in $w$. Since $w$ is $3$-covering, $\a^3$ occurs in it of which $\a^2$ is a factor. Also, one of ${\tt aab}$, ${\tt aba}$, ${\tt baa}$ occurs, meaning that a neighboring $\a$ and $\b$ are present. \qed
\end{proof}

\begin{proposition}
\label{prop:kcoverbut}
If $\sigma \geq 3$ and $k\geq 4$, then there exist words which are $k$-covering, but are not $(k-1)$-covering.
\end{proposition}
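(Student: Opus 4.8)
The plan is to exhibit, for every $\sigma \geq 3$ and $k \geq 4$, an explicit family of words that are $k$-covering but miss at least one Parikh vector of order $k-1$. The example $w$ given just before the proposition — a $(5,4)$-PdB word missing the substring with Pv $(1,1,1,1)$ — is the prototype, so the natural approach is to generalize the obstruction it embodies. The key observation is that the ``central'' Parikh vectors of order $k-1$, i.e.\ those with all entries strictly positive and roughly balanced, are the hardest to realize, and a word can be engineered to contain every order-$k$ window while systematically avoiding one such central order-$(k-1)$ vector.

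First I would reduce the general statement to the case $\sigma = 3$. The point is that if $w$ is a $(k,3)$-covering word over $\{\a,\b,\c\}$ that is not $(k-1,3)$-covering, say it has no substring with Pv $q_0 = (a,b,c)$ where $a+b+c = k-1$ and $a,b,c \geq 1$, then one can extend $w$ to a larger alphabet while preserving the relevant properties — for instance by appending blocks $\d^k, \e^k, \ldots$ (and suitable connecting material) that introduce all order-$k$ vectors involving the new letters without creating a substring of Pv $(a,b,c,0,\ldots,0)$. A cleaner variant: take the $3$-letter witness and pad it so that the new letters never appear in a window together with more than a bounded number of the old letters. The bookkeeping here is routine but must be done carefully so that the order-$(k-1)$ vector that was missing in the $3$-letter case stays missing (as a vector with zeros in the new coordinates) in the $\sigma$-letter case.

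So the heart of the argument is the $\sigma = 3$, $k \geq 4$ construction. Here I would aim for a word built as a concatenation of monochromatic or nearly-monochromatic runs, arranged so that: (i) every order-$k$ Pv over $\{\a,\b,\c\}$ appears as a window — this can be guaranteed by a covering-type traversal of $\Q(k,3)$, using Theorem~\ref{thm:connected} (the whole grid is connected, hence realizable) to know such covering words exist at all, and then checking that a specific well-structured one still works; and (ii) one fixed central vector, e.g.\ $q_0 = (1,1,k-3)$ or a more balanced choice like $(\lceil (k-1)/3\rceil, \lfloor (k-1)/3\rfloor, \ldots)$, never occurs. Condition (ii) is the delicate part: I would try to choose the construction so that whenever a window of length $k-1$ contains all three letters, at least one of them occurs with multiplicity forced to be too large (or too small) to equal the corresponding entry of $q_0$. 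Concretely, if the word is composed of long runs $\x^{t}$ with $t \geq k$ separated by short ``transition'' regions of total length $< $ something, then any length-$(k-1)$ window either sits inside one run (Pv on the boundary of the simplex, never equal to an interior $q_0$) or straddles at most two runs plus a transition, in which case two of the three letters dominate and the third, if present at all, appears only inside the short transition, hence with small multiplicity — so by making $q_0$'s smallest positive entry exceed that bound, $q_0$ is excluded. Simultaneously the order-$k$ windows, being one longer, have just enough room to pick up the missing interior vectors of order $k$, so $w$ is still $k$-covering. I'd verify $k$-covering by the same stratification: interior order-$k$ vectors are realized precisely at the transition regions, which I would design (perhaps one per ``missing'' vector) to realize them.

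The main obstacle I anticipate is making conditions (i) and (ii) compatible in full generality — i.e.\ showing one can cover all of $\Q(k,3)$ with a word whose transition regions are short enough that no length-$(k-1)$ window ever accumulates a balanced three-letter profile. There is genuine tension: covering the interior order-$k$ vectors forces the word to pass through regions where all three letters are present in comparable amounts, and a length-$(k-1)$ sub-window there is dangerously close to $q_0$. The resolution, I expect, is a parity/counting gap: the interior order-$k$ vectors that must be visited can be chosen one unit away from $q_0$ in a controlled direction, and a single transition step only changes one coordinate by one, so the order-$(k-1)$ windows adjacent to such a visit differ from $q_0$ in a way that can be tracked and excluded. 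Getting this gap argument to work uniformly in $k$ and then lifting cleanly to all $\sigma \geq 3$ is where the real care is needed; everything else (counting windows, checking runs realize boundary vectors) is mechanical. If a fully uniform construction proves awkward, a fallback is to give the construction explicitly for small $k$ (as the displayed $(5,4)$ example already does for one case) and handle large $k$ by a simple padding/induction on $k$, inflating each run length by one.
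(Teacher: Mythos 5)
Your write-up is a plan rather than a proof, and the plan's central step is exactly the point you yourself flag as unresolved. To make $w$ $k$-covering you must in particular realize every \emph{parent} of your avoided vector $q_0$ (these are order-$k$ Parikh vectors), i.e.\ the word must contain length-$k$ windows whose Pv is $q_0+e_i$ for each $i$; each such window has length-$(k-1)$ prefix and suffix that are children of $q_0+e_i$, so avoiding $q_0$ requires controlling precisely which character is dropped at both ends of every such window. Your proposed mechanism -- long monochromatic runs with short transitions, with $q_0$ chosen balanced so its smallest entry exceeds what a transition can supply -- is self-defeating: the same structure prevents any length-$k$ window from having a balanced Pv either, so the parents of $q_0$ (and other interior order-$k$ vectors) are not covered and $w$ fails to be $k$-covering. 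The ``parity/counting gap'' you invoke to reconcile this is never made into an argument, and the fallback (explicit small cases plus ``induction on $k$ by inflating runs'') is not set up either. A secondary soft spot is the reduction to $\sigma=3$: for $\sigma\geq 4$ a $k$-covering word must realize all mixed vectors involving the new letters together with $\a,\b,\c$ (e.g.\ $(1,1,1,1,0,\dots)$ for $k=4$), so ``appending blocks $\d^k,\e^k,\ldots$ with suitable connecting material'' is not routine bookkeeping, and the connecting material must itself be checked not to create the avoided vector.

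For comparison, the paper avoids all of this by choosing the missed $(k-1)$-order vector on the \emph{boundary} of the grid, namely $p=(k-3,1,1,0,\ldots,0)$, and working for all $\sigma\geq 3$ at once. It deletes the $\sigma$ parents of $p$ from the $(k,\sigma)$-grid, notes the rest is still connected, and applies Theorem~\ref{thm:connected} to get a word $w'$ realizing every remaining vertex; then it re-inserts visits to the deleted parents by replacing, for each $\x\in\Sigma$, one factor $\x^k$ of $w'$ by $\x^k u(\x)\x^k$ with explicitly given short gadgets $u(\x)$ (e.g.\ $u(\a)=\b\a^{k-2}\c$), chosen so that every new window of length $k$ hits a parent of $p$ while no window of length $k-1$ ever has Pv $p$ -- in grid terms, the walk touches the vertices of the simplex of $p$ but never uses its edges or bows. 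If you want to salvage your approach, the fix is to abandon the balanced $q_0$ and the run-length heuristic and instead argue at the level of the grid as the paper does: delete the parents of a boundary vector, invoke realizability of the connected remainder, and exhibit concrete insertions that visit those parents ``safely.''
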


\begin{proof}
We will show that a $k$-covering word exists that avoids the Parikh vector $p=(k-3, 1, 1, \underbrace{0,\ldots,0}_{\text{$\sigma-3$ zeroes}})$. 
Consider the $k$-PdB grid, in which $p$ corresponds to a simplex whose vertices are the parents of $p$. Our goal is to construct a string $w$ so that the walk in the grid corresponding to $w$ visits every vertex in the grid, but avoids edges of the simplex representing $p$. Note that this includes in particular bows corresponding to $p$. We will do this by first constructing a walk that avoids even the vertices of this simplex, and then modifying it so that it visits these vertices, but does so in a ``safe'' way: whenever the walk visits a vertex of this simplex, in the next step it immediately turns away from it. In particular, it does not use any bows incident to $p$. 

If we remove the parent vertices of $p$ (together with incident edges and bows) from the PdB grid, then the remaining graph is still connected and contains all bows in all remaining vertices, thus by Theorem~\ref{thm:connected} there exists a word $w'$ that visits the remaining vertices. This word is almost $k$-covering, except for the vertices of the removed simplex. We now show how to modify $w'$ so that it visits all vertices of the grid without using edges of the simplex corresponding to $p$. Observe that for each $\x\in \Sigma$, $w'$ already contains $\x^k$ as a factor. We obtain a new string $w$ from $w'$ by replacing, for every $\x\in\Sigma$, one factor $\x^k$ by the string $\x^k u(\x) \x^k$, specified below for each $\x\in\Sigma$. This way it is ensured that any new Parikh vector can only appear within these replacement strings.

Notice that the simplex corresponding to $p$ lies on a side of the PdB-grid. The vertices of this simplex can be categorized into three types: (1) the parent where the first coordinate is increased as compared to $p$, (2) the two parents where the second or third coordinate is increased, and (3) those where any other coordinate is increased. We will define $u(\x)$ differently for the three types of parents, namely: 

\begin{enumerate}
\item $u(\a) =\b\a^{k-2}\c$; 
\item $u(\b)=\a^{k-3}\b^2\c$ and $u(\c)=\a^{k-3}\c^2\b$; 
\item $u(\x)=\b\a^{k-3}\x\c$ for all $\x\in \Sigma \setminus \{\a, \b, \c\}$. 
\end{enumerate}

It is straightforward to check that the word $w$ obtained in this way really visits the parents of $p$ in the desired manner. \qed

\end{proof}

This only implies that at least one Parikh vector of order $k-1$ can be avoided. One could ask to what extent a $k$-covering word can fail to be $(k-1)$-covering.

\newcommand{\covprop}{\textrm{mincov}}

\begin{question}
\label{question:mincov}
Consider the minimum proportion of visited $(k-1)$-order Parikh vectors among $k$-covering words, that is
\begin{equation}
\covprop(k, \sigma) =  \min \left\{\frac{|\Pi_{k-1}(w)|}{\binom{\sigma + k -2}{k-1}} \mid w \textrm{ is } (k, \sigma) \textrm{-covering} \right\}
\end{equation}
Can we determine the exact value of $\liminf_{k\to\infty}\covprop(k, \sigma)$? Is the $\liminf$ actually a limit? 
\end{question}

\newcommand{\covset}{\textrm{covset}}

\begin{question}
Consider $\covset(w) = \{ k \mid w \textrm{ is } k\textrm{-covering} \}$.
Which finite sets $K\subseteq \mathbb{N}^+$ are realizable as $K = \covset(w)$ for some $w\in \Sigma^*$? For $\sigma = 2$ these are exactly sets of the form $\{1, 2, \ldots, k\}$. For $\sigma \geq 3$, we have by Proposition \ref{prop:covering1} that two necessary conditions are $3\in K \Rightarrow 2\in K$ and $2\in K \Rightarrow 1\in K$. It seems possible that these conditions are also sufficient, but a simple proof of this has eluded us so far.
\end{question}

\subsection{Shortest covering words}\label{sec:shortest}

We now turn our attention to shortest $k$-covering words.  

The classic concept of a {\em universal cycle} is defined as follows~\cite{CDG92}: Given a set ${\cal F}_n$ of combinatorial objects with the property that each object can be represented by a (not necessarily unique) string of length $n$ over alphabet $A$; a {\em universal cycle for ${\cal F}_n$} is a cyclic string over $A$ of length $m$ s.t.\ there is a one-to-one correspondence between elements of ${\cal F}_n$ and substrings $a_i\ldots a_{i+n-1}$ of length $n$ of $A$, where addition is modulo $m$. In particular, necessarily $m=|{\cal F}_n|$. 

Recall that classical de Bruijn sequences come in two flavours, linear or cyclic sequences. 
One can similarly define {\em cyclic \pdb-strings}: these have exactly one substring for each Pv of order $k$, where the last $k-1$ positions are viewed as continuing with the beginning of the string.
Thus, cyclic de Bruijn sequences of order $k$ are universal cycles for $\Sigma^k$, while cyclic $(k,\sigma)$-\pdb-words are universal cycles for $\PV(k,\sigma)$. It is easy to see that there is a one-to-one correspondence between $\PV(k,\sigma)$ and $k$-size multisets over a ground set of size $\sigma$ (also called multicombinations), therefore cyclic \pdb-strings can also be viewed as universal cycles for multicombinations.
 There are some known results about these and closely related universal cycles~\cite{CDG92,J93,HJZ09,SBEIMSVVYZ02,J_manus,TAOCP4}.

\medskip

A universal cycle viewed as a cyclic string $w$ always gives rise to a PdB word: simply append the first $k-1$ letters of $w$ at the end. But PdB words can also be constructed that do not come from such a cycle: an example is ${\tt abbbcccaaabc}$, which is a PdB word for $\sigma = k = 3$. Theorem~\ref{thm:chung} below implies that in fact no $(3, 3)$-PdB word comes from a universal cycle.

The following necessary condition is known from the literature on universal cycles. The proof, translated into the language of Parikh vectors, goes by counting occurrences of letters and the contribution of each occurrence to individual coordinates of Parikh vectors. 

\begin{theorem}\cite{CDG92}
\label{thm:chung}
If a universal cycle on $k$ multisets exists, then $\binom{\sigma+k-1}{k-1} / k$ is an integer.
\end{theorem}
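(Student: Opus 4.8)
The plan is to use a counting argument based on how many times each letter of the alphabet appears in the cyclic universal cycle, and how each such appearance contributes to the coordinates of the windows' Parikh vectors. Suppose $w = w_1 \ldots w_m$ is a cyclic universal cycle for $\PV(k,\sigma)$, so that $m = |\PV(k,\sigma)| = \binom{\sigma+k-1}{k}$, and every length-$k$ cyclic window $w_i \ldots w_{i+k-1}$ (indices mod $m$) realizes each $k$-order Parikh vector exactly once. Fix a letter $\a_\ell \in \Sigma$, and let $c_\ell$ be the number of positions $i$ with $w_i = \a_\ell$. Each occurrence of $\a_\ell$ at position $i$ contributes $1$ to the $\ell$-th coordinate of exactly $k$ of the cyclic windows (those starting at positions $i-k+1, \ldots, i$). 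Hence summing the $\ell$-th coordinate over all $m$ windows gives $k\, c_\ell$ on one hand, and on the other hand it equals $\sum_{p \in \PV(k,\sigma)} p_\ell$, the total of the $\ell$-th coordinate over all $k$-order Parikh vectors.

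The second ingredient is to evaluate $S := \sum_{p \in \PV(k,\sigma)} p_\ell$, which by symmetry of the coordinates is independent of $\ell$. I would compute it by the standard double-counting trick: $S = \sum_p p_\ell = \sum_{j \geq 1} |\{ p \in \PV(k,\sigma) : p_\ell \geq j\}|$, and $\{p : p_\ell \geq j\}$ is in bijection with $\PV(k-j,\sigma)$ via $p \mapsto p - j e_\ell$, so $S = \sum_{j=1}^{k} \binom{k-j+\sigma-1}{\sigma-1} = \sum_{t=0}^{k-1} \binom{t+\sigma-1}{\sigma-1} = \binom{k+\sigma-1}{\sigma} = \binom{k+\sigma-1}{k-1}$, using the hockey-stick identity. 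Therefore $k\, c_\ell = \binom{k+\sigma-1}{k-1}$, and since $c_\ell$ is a non-negative integer, $\binom{\sigma+k-1}{k-1}/k$ must be an integer, which is exactly the claim.

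I do not expect a serious obstacle here: the only point needing mild care is the cyclic bookkeeping — each occurrence of $\a_\ell$ lies in exactly $k$ of the $m$ cyclic windows, which relies on $m \geq k$ (true since $\binom{\sigma+k-1}{k} \geq k$ for all $\sigma, k \geq 1$, so no window "wraps past itself"). The combinatorial identity $S = \binom{k+\sigma-1}{k-1}$ is routine; one could also derive it directly from $\sum_\ell S = k \cdot |\PV(k,\sigma)| = k\binom{k+\sigma-1}{k}$ and symmetry, giving $S = \frac{k}{\sigma}\binom{k+\sigma-1}{k} = \binom{k+\sigma-1}{k-1}$, which avoids the hockey-stick step altogether. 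Either route yields $k \mid \binom{\sigma+k-1}{k-1}$.
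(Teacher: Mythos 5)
Your argument is correct and is exactly the counting approach the paper indicates (and carries out, in inequality form, in the proof of Theorem~\ref{thm:shortest}): each occurrence of a letter lies in exactly $k$ cyclic windows, while the total of any fixed coordinate over all $k$-order Parikh vectors is $\binom{\sigma+k-1}{k-1}$, forcing $k \mid \binom{\sigma+k-1}{k-1}$. No gaps; the cyclic-window bookkeeping and the identity $\sum_p p_\ell = \frac{k}{\sigma}\binom{\sigma+k-1}{k} = \binom{\sigma+k-1}{k-1}$ are handled correctly.
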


Slightly modifying the argument from the proof in~\cite{CDG92}, we obtain some lower bounds for the length of $k$-covering words.

\begin{theorem}
\label{thm:shortest}
A shortest $k$-covering word has length at least
\begin{equation}
\max \left(\binom{\sigma+k-1}{k} + k-1, \sigma \cdot \left\lceil \binom{\sigma+k-1}{k-1} / k \right\rceil \right)
\end{equation} 

\noindent In particular, a $(k,\sigma)$-\pdb-word can only exist if 
$\binom{\sigma+k-1}{k} + k-1 \geq \sigma \left\lceil \binom{\sigma+k-1}{k-1} / k \right\rceil$. 
\end{theorem}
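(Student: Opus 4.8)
The plan is to establish the two lower bounds separately and then take their maximum. The first bound, $\binom{\sigma+k-1}{k} + k-1$, is already Observation~\ref{obs:pdblength}: a word of length $n$ has at most $n-k+1$ length-$k$ windows, so to realize all $|\PV(k,\sigma)| = \binom{\sigma+k-1}{k}$ Parikh vectors one needs $n-k+1 \geq \binom{\sigma+k-1}{k}$. So the real work is the second bound, $\sigma \cdot \lceil \binom{\sigma+k-1}{k-1}/k \rceil$, which I would obtain by adapting the counting argument behind Theorem~\ref{thm:chung} from the cyclic (universal-cycle) setting to the linear setting, paying the price of a ceiling and some boundary slack.

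The key identity to exploit is a double count of ``letter occurrences inside windows.'' Fix a coordinate $i \in \{1,\ldots,\sigma\}$. On one hand, summing the $i$-th component over all Parikh vectors of order $k$ gives a fixed number $S_i := \sum_{p \in \PV(k,\sigma)} p_i$; by the symmetry of $\PV(k,\sigma)$ under permuting coordinates, $S_i$ is independent of $i$, and since $\sum_i S_i = k\,|\PV(k,\sigma)| = k\binom{\sigma+k-1}{k}$, we get $S_i = \frac{k}{\sigma}\binom{\sigma+k-1}{k} = \binom{\sigma+k-1}{k-1}$ (using $\frac{k}{\sigma}\binom{\sigma+k-1}{k} = \binom{\sigma+k-1}{k-1}$, a routine binomial simplification). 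On the other hand, if $w$ is a $(k,\sigma)$-covering word of length $n$, then for each $k$-order Parikh vector $p$ pick one window of $w$ realizing it; the total number of occurrences of letter $\a_i$ counted across these chosen windows is exactly $S_i = \binom{\sigma+k-1}{k-1}$. But each individual occurrence of $\a_i$ in $w$ lies in at most $k$ of the $n-k+1$ windows (a position is covered by at most $k$ length-$k$ windows), hence contributes to at most $k$ of the chosen windows. Therefore, if $m_i$ denotes the number of occurrences of $\a_i$ in $w$, we get $k\cdot m_i \geq \binom{\sigma+k-1}{k-1}$, i.e. $m_i \geq \lceil \binom{\sigma+k-1}{k-1}/k \rceil$. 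Summing over $i$ and using $\sum_i m_i = |w| = n$ yields $n \geq \sigma\lceil \binom{\sigma+k-1}{k-1}/k \rceil$, which is the second bound. Combining the two bounds gives the stated maximum, and the final ``in particular'' claim is immediate: a $(k,\sigma)$-PdB word has length exactly $\binom{\sigma+k-1}{k} + k-1$ by Observation~\ref{obs:pdblength}, so this quantity must be at least the second bound for such a word to exist.

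The main obstacle I anticipate is the boundary effect: the clean universal-cycle argument of~\cite{CDG92} uses that every occurrence lies in \emph{exactly} $k$ windows (cyclically), which makes $\binom{\sigma+k-1}{k-1}/k$ an exact integer; in the linear case, occurrences near the two ends of $w$ lie in fewer than $k$ windows, so one only gets the inequality $k\cdot m_i \geq \binom{\sigma+k-1}{k-1}$ rather than equality, and this is exactly what forces the ceiling rather than exact divisibility. I should double-check that the ``at most $k$ windows per position'' bound is applied correctly (a position $j$ is the start-plus-offset of window $[\ell,\ell+k-1]$ for $\ell \in \{j-k+1,\ldots,j\}\cap\{1,\ldots,n-k+1\}$, which has size at most $k$), and that choosing one representative window per Parikh vector — rather than counting all windows — is what makes the count $S_i$ come out to $\binom{\sigma+k-1}{k-1}$ rather than something depending on $w$. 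The binomial simplifications ($S_i = \binom{\sigma+k-1}{k-1}$ and the ceiling manipulation) are routine and I would not belabor them in the write-up.
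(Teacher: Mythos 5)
Your proposal is correct and follows essentially the same route as the paper: the first bound is Observation~\ref{obs:pdblength}, and the second comes from the same double count adapted from~\cite{CDG92}, bounding the occurrences of each letter $\a_i$ via $k\,m_i \geq \binom{\sigma+k-1}{k-1}$ and summing over the alphabet. The only cosmetic difference is that you count over one representative window per Parikh vector while the paper counts pairs (position, Parikh vector), which is an equivalent bookkeeping choice.
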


\begin{proof} 
Consider occurrences of the letter $\a$. Count all pairs $(j,
p)$ where the letter at position $j$ is $\a$ and is covered by a
substring of length $k$ with Parikh vector exactly $p$.  Each occurrence of $\a$
is counted at most $k$ times, thus the number of such pairs is at most $kn_\a$,
where $n_\a$ is the number of occurrences of $\a$ in the string. If the
word is covering, then all possible $\binom {\sigma+k-1}{k}$ Parikh
vectors occur somewhere, each covering some occurrences (possibly 0)
of letters $\a$. The number of $\a$'s in all Parikh vectors is $k\cdot\binom
 {\sigma+k-1}{k}/ \sigma = \binom{\sigma+k-1}{k-1}$. From this we have

\begin{equation}
k n_a \geq \binom{\sigma+k-1}{k-1} 
\end{equation}

and so 
\begin{equation}
\label{eq:1}
n_\a \geq \left \lceil  \binom{\sigma+k-1}{k-1}/ k \right\rceil
\end{equation}

Summing for all letters in the alphabet, we get 
\[
|w| = \sum_{\x\in \Sigma} n_{\x} \geq \sigma \cdot \left\lceil \binom{\sigma+k-1}{k-1} / k \right\rceil
\]
The other lower bound comes from Observation \ref{obs:pdblength}. \qed
\end{proof}

\begin{corollary}
\label{cor:shortest2}
If $\sigma > k^2 - k$ then a $(k, \sigma)$-PdB word can exist only if $k$ divides $\binom{\sigma + k - 1}{k-1}$. As a special case, if $\sigma > k^2 - k$ and $k$ is a prime, then a $(k, \sigma)$-PdB word can exist only if 
$k$ does not divide $\sigma$.
\end{corollary}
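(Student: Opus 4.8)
The plan is to derive Corollary~\ref{cor:shortest2} directly from Theorem~\ref{thm:shortest} by a careful asymptotic/size comparison of the two quantities appearing in the maximum. By Theorem~\ref{thm:shortest}, a $(k,\sigma)$-PdB word exists only if
\begin{equation}
\binom{\sigma+k-1}{k} + k - 1 \;\geq\; \sigma \left\lceil \binom{\sigma+k-1}{k-1} / k \right\rceil .
\end{equation}
So the first step is to show that when $\sigma > k^2-k$ and $k \nmid \binom{\sigma+k-1}{k-1}$, this inequality fails, hence no PdB word exists. Write $N = \binom{\sigma+k-1}{k-1}$. Since $k \nmid N$, we have $\lceil N/k \rceil \geq (N+1)/k$, so the right-hand side is at least $\sigma(N+1)/k$. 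On the other hand, using $\binom{\sigma+k-1}{k} = \frac{\sigma}{k}\binom{\sigma+k-1}{k-1} = \frac{\sigma}{k} N$ (the identity already used in the proof of Theorem~\ref{thm:shortest}), the left-hand side equals $\frac{\sigma}{k} N + k - 1$. Thus the inequality would require
\begin{equation}
\frac{\sigma}{k} N + k - 1 \;\geq\; \frac{\sigma}{k}(N+1) \;=\; \frac{\sigma}{k} N + \frac{\sigma}{k},
\end{equation}
i.e.\ $k - 1 \geq \sigma/k$, i.e.\ $\sigma \leq k(k-1) = k^2 - k$. This contradicts $\sigma > k^2-k$, proving the first claim.

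The second step handles the special case. If $k$ is prime, then $k \mid \binom{\sigma+k-1}{k-1}$ needs to be analyzed. Here I would use Kummer's theorem (or Lucas' theorem) on the $k$-adic valuation of binomial coefficients, or more elementarily just note that $\binom{\sigma+k-1}{k-1} = \frac{(\sigma+k-1)(\sigma+k-2)\cdots(\sigma+1)\sigma}{(k-1)!}$ — wait, more precisely $\binom{\sigma+k-1}{k-1} = \prod_{i=1}^{k-1}\frac{\sigma+i}{i}$, and since $k$ is prime and $k > k-1 \geq i$ for all factors in the denominator, $k$ does not divide $(k-1)!$. Among the $k$ consecutive integers $\sigma, \sigma+1, \ldots, \sigma+k-1$, exactly one is divisible by $k$. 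If $k \mid \sigma$, then that multiple of $k$ is $\sigma$ itself (since the other terms $\sigma+1,\dots,\sigma+k-1$ are then not divisible by $k$), and in fact $\sigma$ appears in the numerator of $\binom{\sigma+k-1}{k-1}=\frac{\sigma\,(\sigma+1)\cdots(\sigma+k-1)}{k!}$ — hmm, one must be careful about whether the $k$ in the denominator cancels it. The cleanest route: $\binom{\sigma+k-1}{k-1} = \frac{\sigma}{k}\binom{\sigma+k-1}{k}$ does \emph{not} immediately help; instead note $\binom{\sigma+k-1}{k} = \binom{\sigma+k-2}{k} + \binom{\sigma+k-2}{k-1}$, or better, directly compute $v_k\!\left(\binom{\sigma+k-1}{k-1}\right)$ via Kummer: it equals the number of carries when adding $(k-1)$ and $\sigma$ in base $k$. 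Since $k-1$ is a single digit $(k-1)$ in base $k$, adding it to $\sigma$ produces a carry iff the last base-$k$ digit of $\sigma$ is $\geq 1$, i.e.\ iff $k \nmid \sigma$. Hence $k \mid \binom{\sigma+k-1}{k-1}$ iff $k \nmid \sigma$ (for $k$ prime there are no higher-order carry complications since $k-1$ has only one nonzero digit). Combining with the first part: when $\sigma > k^2-k$ and $k$ prime, a PdB word can exist only if $k \mid \binom{\sigma+k-1}{k-1}$, which by the above happens only if $k \nmid \sigma$.

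I expect the main obstacle to be the second, arithmetic part — getting the divisibility statement "$k \mid \binom{\sigma+k-1}{k-1} \iff k \nmid \sigma$" cleanly and correctly, since naive cancellation arguments with factorials are error-prone. Invoking Kummer's or Lucas' theorem makes this rigorous and short, so I would state that explicitly rather than attempt an elementary manipulation. The first part is essentially the estimate $\lceil N/k\rceil \geq (N+1)/k$ when $k\nmid N$, fed into Theorem~\ref{thm:shortest}, which is routine; the only care needed is to keep the $\frac{\sigma}{k}N$ terms bookkept exactly so that they cancel and leave the clean inequality $\sigma \leq k^2-k$.
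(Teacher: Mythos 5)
Your proposal is correct and takes essentially the same route as the paper: the estimate $\lceil N/k\rceil \geq (N+1)/k$ for $k\nmid N$, fed into Theorem~\ref{thm:shortest} via the identity $\binom{\sigma+k-1}{k}=\frac{\sigma}{k}\binom{\sigma+k-1}{k-1}$, reduces to $\sigma\leq k^2-k$, contradicting the hypothesis. For the prime case the paper settles the divisibility step elementarily---since $k$ is prime it does not divide $(k-1)!$, so $k\mid\binom{\sigma+k-1}{k-1}$ iff $k$ divides one of $\sigma+1,\ldots,\sigma+k-1$, iff $k\nmid\sigma$---which is the same fact you obtain via Kummer's theorem (and your worry about carry propagation is immaterial, since only the existence of at least one carry matters for divisibility by $k$ itself).
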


\begin{proof} 
If $\binom{\sigma+k-1}{k-1}/ k$ fails to be an integer, then rounding it increases its value by at least $1/k$. Thus
\begin{eqnarray*}
\sigma \cdot \left\lceil \binom{\sigma+k-1}{k-1} / k \right\rceil &\geq& \frac{\sigma}{k}\cdot  \binom{\sigma+k-1}{k-1} + \frac{\sigma}{k} = \\ 
\binom{\sigma + k -1}{k} + \frac{\sigma}{k} &>& \binom{\sigma + k -1}{k} + k - 1
\end{eqnarray*}
if $\sigma>k^2-k$. 

Note that if $k$ is prime, then $\binom{\sigma+k-1}{k-1}/ k$ is an integer if and only if $k$ divides one of the $(k-1)$ neighboring integers $(\sigma+1)(\sigma+2)\cdots (\sigma+k-1)$. This is the case if and only if $k$ does not divide $\sigma$.

\qed
\end{proof}

We now give upper bounds for the length of shortest covering words.

\begin{proposition}
\label{prop:coverk2}
If $k=2$ and $\sigma \leq 3$ then the shortest $k$-covering string has length $\binom{\sigma + 1}{2} + 1$ if $\sigma$ is odd, and $\binom{\sigma + 1}{2} + \sigma/2$ for $\sigma$ even. This means that for odd $\sigma$, PdB words always exist, and for even $\sigma$ the minimal excess is $\sigma/2-1$.
\end{proposition}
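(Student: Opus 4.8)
The plan is to reduce the $k=2$ case to a question about Eulerian trails. Let $K$ be the graph on vertex set $\Sigma$ consisting of all $\binom{\sigma}{2}$ proper edges together with one loop at each vertex; then $K$ is connected, has $\binom{\sigma}{2}+\sigma=\binom{\sigma+1}{2}$ edges, and every vertex has degree $(\sigma-1)+2=\sigma+1$. Reading a word $w=w_1\cdots w_n$ as the vertex sequence $w_1,\ldots,w_n$ identifies $w$ with a walk in $K$ using $n-1$ edges, and conversely; the length-$2$ substring $w_iw_{i+1}$ is the $i$-th edge of the walk, a loop exactly when $w_i=w_{i+1}$. Since sending a proper edge $\{\a_i,\a_j\}$ to $e_i+e_j$ and the loop at $\a_i$ to $2e_i$ is a bijection between $E(K)$ and $\PV(2,\sigma)$, a length-$n$ word is $(2,\sigma)$-covering iff its walk traverses every edge of $K$, and it is a $(2,\sigma)$-\pdb-word iff its walk traverses each edge exactly once, i.e.\ is an Eulerian trail of $K$. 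Thus everything turns on how many of the (all equal-degree) vertices of $K$ have odd degree $\sigma+1$.

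If $\sigma$ is odd, then $\sigma+1$ is even, so by Euler's theorem $K$ has an Eulerian circuit; the corresponding word has length $|E(K)|+1=\binom{\sigma+1}{2}+1$ and is a $(2,\sigma)$-\pdb-word, and by Observation~\ref{obs:pdblength} no covering word is shorter, so this is optimal and \pdb-words exist. If $\sigma$ is even, all $\sigma$ vertices of $K$ have odd degree. For the lower bound I would apply Theorem~\ref{thm:shortest}: since $\binom{\sigma+1}{1}=\sigma+1$, its second term for $k=2$ equals $\sigma\lceil(\sigma+1)/2\rceil=\sigma(\sigma+2)/2=\binom{\sigma+1}{2}+\sigma/2$, so every $(2,\sigma)$-covering word has length at least $\binom{\sigma+1}{2}+\sigma/2$. (Self-contained alternative: a covering walk is an Eulerian trail of the multigraph obtained from $K$ by adding the edges it repeats; that multigraph has at most two odd-degree vertices, which forces the non-loop part of the added edges to be a multigraph with at least $\sigma-2$ odd-degree vertices, and any such multigraph has at least $\sigma/2-1$ edges.) For the matching upper bound, pick a matching $M$ of $\sigma/2-1$ proper edges of $K$ covering $\sigma-2$ of the vertices (with $M=\emptyset$ when $\sigma=2$); in $K+M$ the two $M$-uncovered vertices keep odd degree $\sigma+1$ while all others have even degree $\sigma+2$, so $K+M$ is connected with exactly two odd-degree vertices and has an Eulerian trail, whose word has length $|E(K)|+|M|+1=\binom{\sigma+1}{2}+\sigma/2$ and, covering every edge of $K$, is $(2,\sigma)$-covering. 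Hence the shortest $(2,\sigma)$-covering word has length exactly $\binom{\sigma+1}{2}+\sigma/2$, i.e.\ excess $\sigma/2-1$; this excess is $0$ only for $\sigma=2$ (where $\a^2\b^2$ is a \pdb-word) and positive for $\sigma\ge 4$, so by Observation~\ref{obs:pdblength} no $(2,\sigma)$-\pdb-word exists for even $\sigma\ge 4$. Restricting to $\sigma\le 3$ recovers the stated proposition, but the argument needs no such restriction.

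The only delicate point is the word$\leftrightarrow$walk dictionary: one must treat loops (consecutive equal letters) correctly and keep straight that $n$ letters correspond to $n-1$ edges, so that a graph with $|E(K)|$ edges yields a word of length $|E(K)|+1$. After that the odd case is immediate from Euler's theorem together with Observation~\ref{obs:pdblength}, and the even case is just Theorem~\ref{thm:shortest} matched against the explicit matching-plus-Euler-trail construction. If one prefers to avoid Theorem~\ref{thm:shortest}, the parity-counting lower bound sketched above becomes the single technical step, and it is routine.
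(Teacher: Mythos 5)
Your proof is correct and takes essentially the same route as the paper: the paper likewise reduces the $k=2$ case to covering by a single walk all edges of the complete graph $K_\sigma$ with a loop added at each vertex, settling the odd case via an Eulerian circuit and the even case by duplicating $\sigma/2-1$ edges. You merely spell out the lower bound (via Theorem~\ref{thm:shortest} or the parity count) and the explicit matching construction that the paper states tersely, and you observe correctly that the argument needs no restriction on $\sigma$.
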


\begin{proof}
We use the language of PdB-grids, but an equivalent elementary reasoning is also possible. Consider the $k-1$ grid: this is essentially a complete graph on $\sigma$ vertices. The $k$-simplices corresponding to order-$k$ Parikh vectors are all outside this grid now, only represented by edges for Parikh vectors containing two $1$s, and loops for Parikh vectors containing a $2$. In order to have a $k$-covering string, we have to visit all of these simplices, that is, we have to find a walk covering all edges in a graph that consists of the complete graph on $\sigma$ vertices, $K_\sigma$, together with a loop at each vertex. If $\sigma$ is odd, then a Eulerian walk exists, if $\sigma$ is even, then we have to traverse at least $\sigma/2 - 1$ edges twice, but this suffices. \qed
\end{proof}

\begin{proposition}
\label{prop:pdbsigma3}
For $k=3$, PdB words exist if and only if either $\sigma = 3$ or $\sigma$ is not a multiple of $3$.
\end{proposition}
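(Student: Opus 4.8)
The plan is to split into the ``existence'' and ``non-existence'' directions. For non-existence when $3 \mid \sigma$ and $\sigma \neq 3$: here $k=3$ is prime and $\sigma > k^2-k = 6$ fails only for $\sigma \in \{3,6\}$, so for $\sigma \geq 9$ with $3 \mid \sigma$ Corollary~\ref{cor:shortest2} immediately rules out a $(3,\sigma)$-PdB word. The remaining case is $\sigma = 6$: here I would check directly whether $\binom{\sigma+k-1}{k}+k-1 \geq \sigma \lceil \binom{\sigma+k-1}{k-1}/k\rceil$ from Theorem~\ref{thm:shortest}. With $\sigma=6$, $k=3$ we get $\binom{8}{3}+2 = 58$ on the left, and $\binom{8}{2} = 28$, so $\lceil 28/3\rceil = 10$, giving $6 \cdot 10 = 60 > 58$ on the right; hence no $(3,6)$-PdB word exists either. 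So the non-existence direction reduces to invoking the counting bounds already proved.

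For the existence direction I must exhibit, or prove the existence of, a $(3,\sigma)$-PdB word whenever $\sigma = 3$ or $3 \nmid \sigma$. For $\sigma = 3$ the word {\tt abbbcccaaabc} given in the text already works, so assume $3 \nmid \sigma$. The natural approach is via the PdB-grid: a $(3,\sigma)$-PdB word corresponds to a realizable Hamiltonian path in $\G(3,\sigma)$ that additionally uses, for each vertex $p$, exactly the bows forced by $p$'s nonzero coordinates being ``touched'' once each, and traverses each order-$3$ simplex (edge or bow-structure) exactly once. Equivalently, one wants an Eulerian-type traversal in the order-$2$ grid (which is essentially $K_\sigma$ with a loop at each vertex, as used in Proposition~\ref{prop:coverk2}) lifted to a walk in the order-$3$ grid, where each order-$3$ Parikh vector is ``consumed'' exactly once. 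I would set up this correspondence carefully: order-$3$ Parikh vectors come in types $(3,0,\ldots)$, $(2,1,0,\ldots)$, and $(1,1,1,0,\ldots)$, contributing respectively to single vertices, edges, and triangles of the order-$2$ grid $K_\sigma$-with-loops, and a PdB word is a closed-up walk consuming each exactly once. The condition $3 \nmid \sigma$ should be exactly what makes the relevant degree/parity counts work out so that such a traversal exists — this mirrors the role of $\sigma$ odd in Proposition~\ref{prop:coverk2}.

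The main obstacle, and the part requiring real work, is the explicit construction (or a clean existence argument) of the traversal when $3 \nmid \sigma$: one must order the triples $\{i,j,\ell\}$ and pairs so that consecutive order-$3$ Parikh vectors are genuinely neighbors in $\G(3,\sigma)$ \emph{and} the realizability condition of Theorem~\ref{thm:chung}'s setting (i.e.\ the character-exchange labels line up, per the Characterization of realizable walks) is met throughout, and so that the bow counts come out exactly right at every vertex. I expect the cleanest route is an inductive or modular construction: handle small residues $\sigma \equiv 1, 2 \pmod 3$ with an explicit pattern analogous to {\tt abbbcccaaabc}, in which blocks $\x^3$ are interleaved with short connector strings of the form $\x\y\z$ cycling through all unordered triples in a fixed combinatorial order (e.g.\ a suitable listing of $3$-subsets such that consecutive ones differ in one element), then verify that the global Parikh-vector multiset is hit exactly once. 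Verifying that no order-$3$ Parikh vector is repeated and none is missed is the delicate bookkeeping step; the existence of a valid cyclic ordering of $3$-subsets with the required adjacency property (a ``Gray-code''-like ordering of multicombinations) is where the hypothesis $3\nmid\sigma$ must be used, and pinning that down precisely is the crux of the proof.
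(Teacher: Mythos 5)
Your non-existence direction is fine, and in fact slightly more careful than the paper: you correctly observed that Corollary~\ref{cor:shortest2} literally requires $\sigma>k^2-k=6$, so it covers only $\sigma\geq 9$ with $3\mid\sigma$, and you settled the leftover case $\sigma=6$ by plugging into Theorem~\ref{thm:shortest} directly ($6\cdot\lceil\binom{8}{2}/3\rceil=60>\binom{8}{3}+2=58$). That computation is correct and closes that half. (Minor slip: the ``realizability condition'' you invoke is the paper's Characterization of realizable walks, not Theorem~\ref{thm:chung}, which is the counting condition for universal cycles.)

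The genuine gap is the existence half for $3\nmid\sigma$, $\sigma\neq 3$. You set up the right framework (a $(3,\sigma)$-PdB word as a universal cycle for $3$-multisets, i.e.\ a traversal of the order-$3$ simplices over the $K_\sigma$-with-loops picture, closed up by appending the first two letters), but you never actually produce the required cyclic ordering of $3$-multisets with the adjacency and bow-count constraints, and you explicitly concede that ``pinning that down precisely is the crux of the proof.'' That crux is not a routine bookkeeping step: it is precisely the content of the known constructions of universal cycles for $3$-subsets/$3$-multisets when $3\nmid\sigma$, which is how the paper disposes of this direction — it cites the constructions in \cite{J93,HJZ09} and uses the earlier observation that any universal cycle yields a PdB word by appending the first $k-1$ letters (plus the explicit word {\tt abbbcccaaabc} for $\sigma=3$, which notably does \emph{not} come from a universal cycle, since $\binom{5}{2}/3$ is not an integer). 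So as written your argument proves only the ``only if'' direction; the ``if'' direction needs either an appeal to those known constructions or a worked-out Gray-code-style construction, which you have sketched but not carried out.
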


\begin{proof}
Constructions for universal cycles for $\sigma$ not a multiple of 3 are given in~\cite{J93,HJZ09}. A $(3, 3)$ PdB word is given by ${\tt abbbcccaaabc}$ (in fact up to reversal and permutations of $\Sigma$, this is the only one as one can verify by a computer search). If $\sigma\geq 6$ is a multiple of $3$, then Corollary \ref{cor:shortest2} applies. \qed
\end{proof}

\begin{question}
What is the length of the shortest covering strings for $\sigma\geq 6$ and $k=3$ if $\sigma$ is a multiple of $3$?
\end{question}

For general $k$ and large alphabets, it was conjectured in \cite{CDG92} that 
for $\sigma$ large enough, depending on $k$, universal cycles exist if and only if $k$ divides $\binom{\sigma+k-1}{k-1}$.
By Corollary \ref{cor:shortest2}, this would imply the following statement.

\begin{conjecture}
For $\sigma$ large enough depending on $k$, \pdb-words exist if and only if $k$ divides $\binom{\sigma+k-1}{k-1}$.
\end{conjecture}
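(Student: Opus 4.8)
We outline a possible route to the conjecture. The ``only if'' direction is already within reach: for $\sigma>k^2-k$, Corollary~\ref{cor:shortest2} shows that a $(k,\sigma)$-\pdb-word can exist only when $k$ divides $\binom{\sigma+k-1}{k-1}$, and ``$\sigma$ large enough depending on $k$'' certainly includes $\sigma>k^2-k$. Hence the whole content of the conjecture lies in the converse direction: assuming $k\mid\binom{\sigma+k-1}{k-1}$ and $\sigma$ large, one must \emph{construct} a \pdb-word.

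The plan is to construct a \emph{cyclic} $(k,\sigma)$-\pdb-word, i.e.\ a universal cycle for $k$-multisets over $\Sigma$; appending its own first $k-1$ letters at the end then yields an ordinary \pdb-word, and for $\sigma>k^2-k$ restricting attention to the cyclic case costs nothing, since by Corollary~\ref{cor:shortest2} the divisibility is forced as soon as \emph{any} \pdb-word exists. In the language of Section~\ref{sec:pdb-grid2}, a cyclic \pdb-word is the same thing as a \emph{bowfree Hamiltonian cycle} in $\G(k,\sigma)$ --- it must be bowfree, since two consecutive equal windows would mean two substrings with the same Parikh vector --- that is realizable in the cyclic sense: writing $\mathrm{in}_i$ and $\mathrm{out}_i$ for the characters entering and leaving the length-$k$ window at the $i$-th step, the cycle closes up to a string exactly when $\mathrm{out}_i=\mathrm{in}_{i-k}$ for all $i$ (indices modulo the cycle length $\binom{\sigma+k-1}{k}$), which is the cyclic form of the realizability characterization. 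So the task is to thread one cycle through all order-$k$ Parikh vectors while meeting this length-$k$ ``phase'' condition.

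I would split $\PV(k,\sigma)$ into \emph{generic} vertices --- the Parikh vectors of $k$-subsets of $\Sigma$, in which every character present occurs exactly once --- and \emph{degenerate} vertices, the remaining ones. For fixed $k$ and $\sigma\to\infty$ the degenerate vertices number only $O(\sigma^{k-1})$, against $\binom{\sigma+k-1}{k}=\Theta(\sigma^{k})$ in total. The first step is to obtain a realizable bowfree Hamiltonian cycle on the generic part alone: this is exactly the problem of universal cycles for $k$-subsets of a set of size $\sigma$, whose solvability for large $\sigma$ under the divisibility hypothesis is the natural base case to import from the literature on universal cycles for set systems. The second step is to absorb the $O(\sigma^{k-1})$ degenerate vertices, including the corner vertices $k\cdot e_j$, into this cycle by local detours, in the spirit of the replacements of a factor $\x^k$ by $\x^k u(\x)\x^k$ used in the proof of Proposition~\ref{prop:kcoverbut}: at a suitable generic position one cuts the cycle and reroutes it along a short gadget that picks up a batch of degenerate vertices and rejoins, without revisiting any vertex and without disturbing realizability.

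The main obstacle is precisely the realizability constraint, which is what makes this strictly harder than the de~Bruijn case, where \emph{any} Eulerian circuit in the order-$(k-1)$ graph works. A local modification of the cycle does not just perturb $\mathrm{in}_i,\mathrm{out}_i$ at the edited edges; it shifts the matching $\mathrm{out}_i=\mathrm{in}_{i-k}$ along a whole window of length $k$, so the detour gadgets have to be engineered to be ``phase-neutral'' and mutually non-interfering, and one must be able to attach many of them to the base cycle in prescribed local configurations. Designing absorbers that are simultaneously internally realizable, widely attachable, and non-overlapping --- and extracting from the imported subset-universal-cycle result enough structural freedom to host them all --- is the crux. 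I would expect a fully self-contained proof to run an iterative-absorption (or R\"odl-nibble) argument directly on $\G(k,\sigma)$, with the design of realizability-compatible absorbers as its central difficulty; a short elementary argument of the kind that settles the small cases above seems out of reach here.
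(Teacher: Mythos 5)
There is no proof to match here: in the paper this statement is explicitly a \emph{conjecture}. The paper only remarks that it would follow, via Corollary~\ref{cor:shortest2} (for the ``only if'' part) and the universal-cycle-to-\pdb-word construction of appending the first $k-1$ letters (for the ``if'' part), from the still-open conjecture of Chung, Diaconis and Graham \cite{CDG92} on universal cycles for $k$-multisets, of which only the cases $k=4$ and $k=5$ are reported settled. Your ``only if'' direction is correct and is exactly the paper's Corollary~\ref{cor:shortest2}; but your ``if'' direction is a research plan, not a proof. The existence of a realizable Hamiltonian cycle threaded through all of $\PV(k,\sigma)$ is precisely the open content of the conjecture, and you yourself concede that the central step (realizability-compatible, phase-neutral absorbers, plus enough structural freedom in the imported base cycle) is unresolved. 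So the proposal does not establish the statement, nor does it reduce it to anything weaker than what is already known to be open.

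Beyond incompleteness, the specific base case you choose is arithmetically obstructed. A bowfree realizable Hamiltonian cycle on your ``generic'' vertices alone is exactly a universal cycle for $k$-subsets of a $\sigma$-element set, and the counting argument behind Theorem~\ref{thm:chung} applied to that object forces $k \mid \binom{\sigma-1}{k-1}$ (each window containing a fixed letter contains it exactly once, and each occurrence is covered by $k$ windows). This is a different condition from the conjecture's hypothesis $k \mid \binom{\sigma+k-1}{k-1}$: for $k=4$ and $\sigma \equiv 6 \pmod 8$ (e.g.\ $\sigma = 14$, where $\binom{17}{3}=680$ is divisible by $4$ but $\binom{13}{3}=286$ is not) the hypothesis holds while your base cycle cannot exist, so the absorption scheme has nothing to start from for infinitely many admissible $\sigma$. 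Note that Knuth's related conjecture cited in the paper \cite{TAOCP4} pairs $k$-multisets of a $\sigma$-set with $k$-subsets of a $(\sigma+k-1)$-set, not of a $\sigma$-set; that is the correct subset analogue if one wants to import subset universal cycles, and it already signals that the degenerate vertices cannot be treated as a negligible $O(\sigma^{k-1})$-sized afterthought, since they carry exactly the discrepancy between the two divisibility conditions.
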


From personal communication \cite{J_manus}, we know that the cases $k=4$ and $k=5$ of these conjectures have been settled.
A final result in this section gives an impossibility result for $\sigma = 3$. 

\begin{theorem}
\label{thm:nopdb}
No PdB words exist for $\sigma = 3$ and $k\geq 4$.
\end{theorem}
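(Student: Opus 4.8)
The plan is to derive a contradiction from the assumption that a $(k,3)$-PdB word $w$ exists for some $k\geq 4$, by combining two kinds of constraints: a parity/counting obstruction in the spirit of Theorem~\ref{thm:chung} and Theorem~\ref{thm:shortest}, and a structural obstruction coming from how a realizable Hamiltonian walk must behave near the corners $k\cdot e_1, k\cdot e_2, k\cdot e_3$ of the PdB-grid. First I would record the elementary length identity: since $w$ is a PdB word, $|w| = \binom{k+2}{2} + k - 1$ by Observation~\ref{obs:pdblength}, and the number of length-$k$ windows is exactly $\binom{k+2}{2} = |\PV(k,3)|$, i.e.\ the walk spelled by $w$ in $\G(k,3)$ is a \emph{Hamiltonian} walk that moreover visits each $(k+1)$-order simplex and each bow at most once (a bow at $p$ is used exactly when some window equals the next, which would repeat the vertex $p$, impossible in a Hamiltonian walk). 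Hence the walk spelled by a PdB word is in fact \emph{bowfree}, and Lemma~\ref{lemma:bowfree} applies to it in full force.

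Next I would exploit Lemma~\ref{lemma:bowfree} at the three corners. The vertex $k\cdot e_1 = (k,0,0)$ is visited exactly once; by part~2, if it occurs as $p_i$ then $p_{i+k}$ lies on the opposite edge $\{(0,a,k-a)\}$, and by part~1 the letters $w_i,\dots,w_{i+k-1}$ are all $\a$ while $w_{i+k}\dots w_{i+2k-1}$ contains no $\a$. Combined with the analogous statements at $(0,k,0)$ and $(0,0,k)$, this pins down three long \emph{monochromatic blocks} $\a^k$, $\b^k$, $\c^k$ in $w$ whose positions are constrained, and forces each of them to be followed (within $k$ positions) by a stretch avoiding that letter. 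I would then track the walk as it leaves, say, the corner $(k,0,0)$: the first few vertices after it must be of the form $(k-1,1,0),(k-2,2,0),\dots$ or $(k-1,0,1),\dots$ — the walk is forced to march monotonically down one edge of the grid until the $\a$-count can start rising again, and similarly it is forced to approach each corner along an edge. The key point is that there are only three corners but each has degree-two "channels" at the two incident edges, and a Hamiltonian bowfree walk must enter and leave every corner, creating rigid local structure that I would argue is incompatible with also covering all the interior vertices exactly once.

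The cleanest route, and the one I would actually push through, is to convert this into a counting contradiction using the refinement of the $\covprop$-type argument in Theorem~\ref{thm:shortest}. Running the letter-counting argument for each of the three letters with $\sigma=3$ gives $n_\a,n_\b,n_\c \geq \lceil \binom{k+2}{1}/k\rceil$... more precisely I would use the \emph{sharper} version that also counts, for each letter $\x$, the windows at the \emph{ends} of $w$ (which cover fewer than $k$ positions) and the forced monochromatic blocks near the corners found above; this tightens the bound $k n_\x \geq \binom{k+2}{k-1}=\binom{k+2}{2}$ by a correction term of order $k$, and summing over the three letters pushes $|w|$ strictly above $\binom{k+2}{2}+k-1$ for every $k\geq 4$, contradicting Observation~\ref{obs:pdblength}. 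The main obstacle I anticipate is making this correction term rigorous: one must show that the boundary effects (short windows at the two ends of $w$, and the "wasted" coverage inside the three forced blocks $\a^k,\b^k,\c^k$) genuinely cost at least one extra occurrence of some letter, rather than being absorbable — this is where the case $\sigma=3$, $k=3$ survives and $k\geq 4$ does not, so the argument must be quantitative and sharp exactly at $k=4$. I would verify $k=4$ (and perhaps $k=5$) by hand or by the computer search already mentioned for the $(3,3)$ case to anchor the base, and then carry the inequality for general $k\geq 4$.
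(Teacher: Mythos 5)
Your opening is on the right track and matches the paper's setup: a $(k,3)$-\pdb-word spells a Hamiltonian, bowfree walk in $\G(k,3)$, so Lemma~\ref{lemma:bowfree} applies at the three corners and forces the blocks $\a^k,\b^k,\c^k$ with $\a$-free (resp.\ $\b$-, $\c$-free) stretches of length $k$ after them. But the core of your argument --- the ``sharpened'' counting step --- has a genuine gap, and in fact cannot work in the form you describe. For a word of the \pdb-length $n=\binom{k+2}{2}+k-1$, the window--position incidence count is an \emph{identity}, not an inequality with slack: summing over all positions, $kn-k(k-1)=k\binom{k+2}{2}$ holds automatically, i.e.\ the boundary losses at the two ends of $w$ (your ``correction term of order $k$'') are exactly absorbed by the $+(k-1)$ in the \pdb-length. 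Per letter one gets the equality $k\,n_\x = \binom{k+2}{3}+L_\x$, where $L_\x\ge 0$ is the lost coverage of $\x$ near the ends and $\sum_\x L_\x=k(k-1)$ is fixed; increasing one $L_\x$ decreases another, and summing the three equalities returns $|w|$ exactly, never more. So no refinement of the Theorem~\ref{thm:shortest} count can push $|w|$ strictly above the \pdb-length (note also the slips: the relevant binomial is $\binom{k+2}{k-1}=\binom{k+2}{3}$, not $\binom{k+2}{2}$, and indeed Corollary~\ref{cor:shortest2} is vacuous for $\sigma=3$, $k\ge 4$). You acknowledge yourself that this step is not made rigorous; it is precisely the missing idea.

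The paper completes the proof structurally rather than by counting. After the corner analysis, it observes that the $k$-step subwalk from $p_{y-k}$ to $p_y=(0,k,0)$ separates the triangular grid into two regions; since $p_{x+k}$ must lie on the side opposite $(k,0,0)$ and the subwalk from $p_x=(k,0,0)$ to $p_{x+k}$ stays in one region, the only available vertex is $(0,k,0)$ itself, forcing $y=x+k$ and likewise $z=y+k$, i.e.\ $w$ contains the factor $\a^k\b^k\c^k$ (which in particular means the walk traverses an entire side of the grid). For $k\ge 5$ the remaining $\binom{k+2}{2}+k-1-3k$ letters guarantee at least $k$ letters before $\a^k$ or after $\c^k$; Lemma~\ref{lemma:bowfree} then places $p_{z+k}$ (or $p_{x-k}$) on a side of the grid all of whose vertices were already visited, contradicting Hamiltonicity; $k=4$ is handled by computer search (as in your anchoring suggestion). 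If you want to salvage your plan, you should replace the counting step by this separation argument (or some equivalent structural obstruction); your vague claim that the walk ``must march monotonically down one edge'' is exactly what the separation argument proves, and it is where the real work lies.
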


\begin{proof} 
Suppose a PdB word $w$ exists. Consider the walk $p_1, p_2, \ldots, p_{\binom{\sigma+k-1}{k}}$ corresponding to $w$ in the $(k, 3)$-PdB grid. By symmetry, we may assume that $(k, 0, 0)$,  $(0, k, 0)$ and $(0, 0, k)$ occur in the walk in this order, say $p_x = (k, 0, 0)$, $p_y=(0, k, 0)$, $p_z = (0, 0, k)$ with $x<y<z$. We know by Lemma~\ref{lemma:bowfree} that for every position $i$, $w_i \neq w_{i+k}$, since otherwise there would be bows in the walk contradicting the PdB property. Thus on the $k$ positions before and after the occurrence of $\b^k$, we only have letters $\a$ and $\c$, in other words $p_{y-k}$ and $p_{y+k}$ both lie on the side of the triangular grid opposite $(0, k, 0)$. Similarly, $p_{x+k}$ lies on the side opposite $(k, 0, 0)$, and $p_{z-k}$ on the side opposite to $(0, 0, k)$.

The path of length $k$ from $p_{y-k}$ to $p_y$ disconnects the grid into two regions. The path from $p_x$ to $p_{x+k}$ lies entirely in one of these regions.
But the only point in this region lying on the side opposite $(k, 0, 0)$ is $(0, k, 0),$  meaning that $p_{x+k} = p_y$, so $y=x+k$. A similar argument shows $z=y+k$. The string $w$ thus contains the factor $\a^k\b^k\c^k$. If $k\geq 5$, then at least $k$ letters precede $\a^k$ or at least $k$ letters follow $\c^k$; w.l.o.g.\ assume the latter. This implies that $p_{z+k}$ lies on the side opposite $(0, 0, k)$, but all of those vertices have already been visited, a contradiction. The case $k=4$ can be excluded by a computer search, or a slightly more intricate analysis of the possible letters preceding and following $\a^4\b^4\c^4$. 
\qed 
\end{proof}

We note that Theorem \ref{thm:nopdb} is in line with a conjecture made by Donald Knuth in ''The Art of Computer Programming''~\cite{TAOCP4} (Sec.\ 7.2.1.3, Problem 109). The conjecture states that a universal cycle on $k$-multisets of a set of size $\sigma$ exists if and only if a universal cycle on $k$-sets of a set of size $\sigma+k-1$ exists. If $\sigma=3$, then this would mean a cycle of $k$-sets of a $(k+2)$ sized set, and it was proved in \cite{SBEIMSVVYZ02} that such cycles do not exist.

\section{Experimental results}\label{sec:experiments}

We searched for shortest covering words for moderate values of $\sigma$ and $k$. In some cases, the search was performed via a backtracking brute force algorithm. In other cases (esp. for larger values of $\sigma$ with $k=3$), for each Parikh vector $p$ of order $k$, we carefully planned a candidate factor from $\Sigma^k$ that realizes $p$. The ideas we used were similar to the ones appearing in \cite{J93,HJZ09}. We give some of our results in Table \ref{tab:minwords}. Recall that the excess of a word $w$ equals $|w| - (\binom{\sigma+k-1}{k} + k-1)$. 

\noindent
\begin{table}
\begin{center}
\begin{tabular}{|c|c|l|c|c|}
\hline
$\sigma$ & $k$ & word & length & PdB (excess) \\
\hline
3 & 2 & \tt{aabbcca} & 7 & yes \\
\hline
3 & 3 & \tt{abbbcccaaabc} & 12 & yes \\
\hline
3 & 4 & \tt{aaaabbbbccccaacabcb} & 19 & no (1) \\
\hline
3 & 5 & \tt{aaaaabbbacccccbbbbbaacaaccb} & 27 & no (2) \\
\hline
3 & 6 & \tt{aaaabccccccaaaaaabbbbbbcccbbcabbaca} & 35 & no (2) \\
\hline
3 & 7 & \tt{aabbbccbbcccabacaaabcbbbbbbbaaaaaaacccccccba} & 44 & no (2) \\
\hline
4 & 2 & \tt{aabbcadbccdd} & 12 & no (1) \\
\hline
4 & 3 & \tt{aaabbbcaadbdbccadddccc} & 22 & yes \\
\hline
4 & 4 & \tt{aabbbbcaacadbddbccacddddaaaabdbbccccdd} & 38 & yes \\
\hline
4 & 5 & \scriptsize{\tt{aaaaabbbbbcaaaadbbbcccccdddddaaaccdbcbaccaccddbddbadacddbbbb}} & 60 & yes \\
\hline
5 & 2 & \tt{aabbcadbeccddeea} & 16 & yes \\
\hline
5 & 3 & \tt{aaabbbcaadbbeaccbdddcccebededadceeeaa} & 37 & yes \\
\hline
5 & 4 & \scriptsize{\tt{aaaabbbbcaaadbbbeaaccbbddaaeaebcccadbeeeadddcccceeeeddddbebecbdcdeceacdad}} & 73 & yes \\
\hline
\end{tabular}

\vspace{,5cm}
\caption{Examples of shortest $k$-covering words for various values of $\sigma$ and $k$.}
\label{tab:minwords}
\end{center}
\end{table}

\section{Conclusion}\label{sec:conclusion}

In this paper we introduced the Parikh-de-Bruijn grid, which we believe could prove a powerful tool in abelian stringology. In accordance with the differences between abelian string problems and their classical counterparts, the \pdb-grid has very different properties from the classical de Bruijn graph. 
We studied realizable Parikh sets, covering words, and Parikh de Bruijn words. Further directions to explore include the connection between combinatorial properties of the walk of a string in the grid and properties of the string itself: can we characterize abelian squares, abelian borders and other concepts using the graph structure or the geometry of the \pdb-grid?

\begin{small}

\end{small}

\end{document}